\documentclass[prx,twocolumn,showpacs,amsmath,amssymb,superscriptaddress,floatfix,reprint,nofootinbib]{revtex4-2}

\usepackage[T1]{fontenc}
\usepackage{amsthm}
\usepackage{booktabs}

\usepackage{macros}

\newcommand{\so}{\mf{so}}
\newcommand{\su}{\mf{su}}
\newcommand{\SO}{\mrm{SO}}
\renewcommand{\O}{\mrm{O}}
\newcommand{\SU}{\mrm{SU}}
\newcommand{\U}{\mrm{U}}
\renewcommand{\S}{\mrm{S}}
\newcommand{\OG}{\mrm{Gr}_0}
\newcommand{\Gr}{\mrm{Gr}}
\renewcommand{\gg}{\hat}
\newcommand{\got}{\mathbin{\gg{\ot}}}
\newcommand{\com}[1]{\mrm{Com}^{(k)}_{#1}}

\makeatletter
\def\l@subsubsection#1#2{}
\makeatother

\makeatletter
\renewcommand\onecolumngrid{
\do@columngrid{one}{\@ne}%
\def\set@footnotewidth{\onecolumngrid}
\def\footnoterule{\kern-6pt\hrule width 1.5in\kern6pt}%
}

\renewcommand\twocolumngrid{
        \def\footnoterule{
        \dimen@\skip\footins\divide\dimen@\thr@@
        \kern-\dimen@\hrule width.5in\kern\dimen@}
        \do@columngrid{mlt}{\tw@}
}%
\makeatother

\begin{document}

\title{Geometry of Free Fermion Commutants}

\author{Marco Lastres}
\email{marco.lastres@tum.de}
\affiliation{Technical University of Munich, TUM School of Natural Sciences, Physics Department, 85748 Garching, Germany}
\affiliation{Munich Center for Quantum Science and Technology (MCQST), Schellingstr. 4, 80799 M\"unchen, Germany}
\author{Sanjay Moudgalya}
\email{sanjay.moudgalya@gmail.com}
\affiliation{Technical University of Munich, TUM School of Natural Sciences, Physics Department, 85748 Garching, Germany}
\affiliation{Munich Center for Quantum Science and Technology (MCQST), Schellingstr. 4, 80799 M\"unchen, Germany}

\date{\today}

\begin{abstract} 
Understanding the structure of operators that commute with $k$ identical replicas of unitary ensembles, also known as their $k$-commutants, is an important problem in quantum many-body physics with deep implications for the late-time behavior of physical quantities such as correlation functions and entanglement entropies under unitary evolution.
In this work, we study the $k$-commutants of free-fermion unitary systems, which are heuristically known to contain $\SO(k)$ and $\SU(k)$ groups without and with particle number conservation respectively, with formal derivations of projectors onto these commutants appearing only very recently.
We establish a complementary perspective by highlighting a larger $\O(2k)$ replica symmetry (or $\SU(2k)$ respectively) that the $k$-commutant transforms irreducibly under, which leads to a simple \textit{geometric} understanding of the commutant in terms of coherent states parametrized by a Grassmannian manifold.
We derive this structure by mapping the $k$-commutant to the ground state of effective ferromagnetic Heisenberg models, analogous to the ones that appear in the noisy circuit literature, which we solve exactly using standard representation theory methods.
Further, we show that the Grassmannian manifold of the $k$-commutant is exactly the manifold of fermionic Gaussian states on $2k$ sites, which reveals a duality between real space and replica space in free-fermion systems.
This geometric understanding also provides a compact projection formula onto the $k$-commutant, based on the resolution of identity for coherent states, which can prove advantageous in analytical calculations of averaged non-linear functionals of Gaussian states, as we demonstrate using some examples for the entanglement entropies.
In all, this work provides a geometric perspective on the $k$-commutant of free-fermions that naturally connects to problems in quantum many-body physics.
\end{abstract}

\maketitle

\section{Introduction} 
A major challenge in physics is to understand and classify the behavior of isolated quantum many-body systems undergoing unitary dynamics.
Generic unitary evolutions are expected to thermalize the system~\cite{d2016quantum}, but the dynamics can become more intricate in the presence of symmetries.
Different classes of dynamics are typically characterized by the late-time behavior of various quantities of physical interest, such as correlation functions and entanglement entropies. 
Given an ensemble of unitaries $\mc U = \{U\}$, the average behavior of simple two-point correlation functions $\langle\widehat{O}(t)^\dagger \widehat{O}(0)\rangle$ under unitary evolution, can be understood using the symmetries of $\mc U$.
On the other hand, the understanding of more complicated quantities such as the entanglement entropies or Out-of-Time-Ordered Correlators (OTOCs)~\cite{Larkin1969, Maldacena2016, Nahum2018, keyserlingk2018hydro}, requires not only the knowledge of the symmetries of $U$, but also of the symmetries of multiple identical \textit{replicas} of the unitary evolution $\mc U^{\otimes k} = \{U^{\ot k}\}$ for some $k \geq 2$~\cite{AmbainisEmerson2007, HarrowLow2009, Gross2007Designs, Dankert2009, brandaoharrowhorodecki2016}.  
The operators commuting with all the unitaries in $\mc U^{\ot k}$ form an associative algebra that is referred to as the \textit{$k$-commutant} of $\mc U$.
In general, the averages of these more complicated quantities over the ensemble $\mc U$ can be expressed in terms of the projectors onto this $k$-commutant.
The $1$-commutant just consists of the symmetries of $\mc U$, which is of course ubiquitously studied for numerous ensembles of unitaries or Hamiltonians. 
These commutants can have a wide variety of structures even for locally generated ensembles of unitaries, which include regular discrete and continuous group-like symmetries, higher-form or non-invertible symmetries~\cite{chatterjee2022algebra, bhardwaj2024lattice}, or exotic symmetries such as quantum many-body scars~\cite{moudgalya2022exhaustive} or Hilbert space fragmentation~\cite{read2007enlarged, moudgalya2022} that result in the breakdown of thermalization~\cite{papic2021review, moudgalya2021review}.
The $2$-commutant, which consist of quadratic symmetries~\cite{zeier2011symmetry, zimboras2015symmetry} or superoperator symmetries~\cite{lastres2026}, naturally appear in the context of dynamical Lie algebras, which have been studied extensively~\cite{Altafini2002, SchulteHerbrueggen2011, d2007introduction, wiersema2023classification}, and are used to characterize the connectivity and universality of unitary operations~\cite{marvian2020locality, lastres2026}.
The $k$-commutant for $k \geq 2$ has also been well-characterized for several ensembles of unitaries, such as the Haar ensemble, where it is exhausted by operators that permute the $k$ identical replicas of the system~\cite{Collins_2006, Dankert2009, Collins_2010, brandaoharrowhorodecki2016}, and has been widely used to study the properties of generic unitary dynamics under random circuits~\cite{fisher2023random}.
A question of particular interest has been up to which value of $k$ a given unitary ensemble possesses the same $k$-commutant as a Haar ensemble which shares the same symmetries.
Such an ensemble of unitaries is said to form a (symmetric) $k$-design~\cite{Gross2007Designs, Dankert2009, HarrowLow2009}.
Unitary ensembles that do not form $k$-designs have a richer $k$-commutant structure, and such a characterization has been recently completed for the Clifford ensemble~\cite{bittel2025clifford}, which is of great importance to resource theories of magic and stabilizer computation, which do not form $k$-designs for $k \geq 4$~\cite{webb2016clifford, zhu2017multiclifford, gross2021clifford}.
It has also been realized that a large class of unitary ensembles that have an on-site symmetry group $G$ also do not form $k$-designs for very large values of $k$, and their $k$-commutants have been completely characterized in recent works ~\cite{hearth2025unitary, liu2024unitary, mitsuhashi2025unitary}.
Another very well-known class of ensembles that do not form $k$-designs for $k \geq 2$ is the Matchgate ensemble built from free-fermion Gaussian unitaries. 
While such unitaries appear to be heuristically known in the literature~\cite{enriched-phases, swann2025, fava2023nonlinear, fava2024} to possess a continuous replica symmetry group $\SO(k)$ or $\SU(k)$ without and with particle number conservation, surprisingly, rigorous characterizations and projection formulas for their $k$-commutants for arbitrary $k$ have only appeared recently~\cite{poetri2026, larocca2026} although they were proven for some low values of $k$ earlier~\cite{Wan2023}.
In this work, our goal is to provide a simple characterization of these $k$-commutants in free-fermion systems, which better highlights the simplicity of their structure, and provide novel methods to compute averages over these free-fermion groups.
In this work, we characterize the free fermion $k$-commutants in terms of ground state manifolds.
It is known that operators in the $1$-commutant, i.e., symmetries of $\mc U$, when viewed as states on $2$ copies of the Hilbert space, can in many cases can be understood as ground states of local frustration-free Hamiltonians that are superoperators that act on $2$ copies of the Hilbert space~\cite{moudgalya2024}.
This understanding also enables the application of conventional intuitions on many-body ground states of frustration-free Hamiltonians to understand the late-time dynamics of physically relevant quantities such as two-point correlation functions, a scenario that also arises in the study of noisy Brownian circuit models~\cite{enriched-phases, ogunnaike2023unifying, moudgalya2024,swann2025,fava2024,Sahu_2024,Agarwal_2022}.
A similar understanding also extends to the $k$-commutants of continuous unitary ensembles, where the operators in the $k$ commutant, when viewed as states on $2k$ copies of Hilbert space, have the form of ground states of appropriate local frustration-free Hamiltonians. 
This understanding was applied to GUE ensemble of unitaries, which shares the same $k$-commutant as the Haar ensemble, to also understand the late-time dynamics of Rényi entanglement entropy from the low-energy excitations of such a Hamiltonian when $k = 2$~\cite{vardhan_entanglement_2024}.
In this work, we derive analogous frustration-free Hamiltonians whose ground state spaces are exactly the $k$-commutants of free-fermion unitaries.
Such Hamiltonians have also naturally appeared in past works in the context of noisy quantum circuit dynamics~\cite{enriched-phases,swann2025,fava2024}.
We then compute said commutants by explicitly showing that these Hamiltonians posses $\SO(2k)$- or $\SU(2k)$-symmetric ferromagnetic ground states, for the free-fermion models without or with particle number conservation respectively.
This has multiple advantages. 
First, it shows that the operators in the $k$-commutant lie in a \textit{single} irreducible representation of the $\O(2k)$ or $\SU(2k)$ group, in contrast to the \textit{multiple} irreducible representations -- whose number grows polynomially with system size -- that are required to characterize the $k$-commutant from its $\SO(k)$ or $\SU(k)$ group structure as done recently~\cite{poetri2026, larocca2026}.
Second, this provides an elegant expression for the projector onto the $k$-commutant, which can now be expressed in terms of integrals over generalized coherent states \cite{generalized-coher-st}, circumventing the need for constructing an orthonormal basis, which would involve a Gelfand-Tsetslin construction~\cite{MOLEV2006109} similar to the one invoked in recent work~\cite{poetri2026, larocca2026}.
A great advantage of the coherent state approach is that these factorize as tensor products of identical states over the whole system, and so the complexity of the integral formula for the projector does not depend on system size.
Finally, our approach leads to a very simple geometric interpretation for the $k$-commutant in terms of the ground state manifold of the effective ferromagnetic model, which we show to be the manifold of fermionic Gaussian states on $2k$ copies of the system, which form the Grassmannian manifolds~\cite{Zirnbauer1996Riemannian, AltlandZirnbauer1997} $\OG(k, 2k$) or $\Gr(k, 2k)$ without or with particle number conservation.

This paper is organized as follows. 
In Sec.~\ref{sec:manycopy}, we introduce the general replica formalism of mapping the $k$-commutants to appropriate ground state manifolds of effective Hamiltonians. 
In Sec.~\ref{sec:mgff}, we apply this formalism to the $k$-commutant of the Matchgate group and derive their manifold structure $\OG(k, 2k)$ that is invariant under $\SO(2k)$. 
In Sec.~\ref{sec:u1ff} we show that the framework is essentially identical for $\U(1)$-symmetric free-fermion unitaries with particle number conservation, where the $k$-commutant now has the manifold structure of $\Gr(k, 2k)$, and is invariant under $\SU(2k)$.
These results can also straightforwardly be extended to free-fermions with various kinds of flavor internal symmetries, as we show in Sec.~\ref{sec:flavorff}.
Finally, in Sec.~\ref{sec:finaltrick}, we demonstrate the advantages of this manifold description by illustrating some computations of physical quantities using generalized coherent states suitable for these commutant manifolds. 
We close in Sec.~\ref{sec:outlook} with a summary of our results and open questions where these techniques might be applicable.
Technical details on these results are relegated to the appendices.

\section{Replica Formalism} 
\label{sec:manycopy}
Consider a many-body tensor product Hilbert space $\mc H_L=\mc H^{\ot L}_1$ and an ensemble of unitaries $\mc U$ that forms a group.
We can define the $k$-commutant of this ensemble $\mc U$ as the algebra of all operators acting on the $k$-fold replicated Hilbert space, that commute with the $k$-th tensor power of all unitaries $U\in\mc U$:
\begin{equation}
	\com{\mc U}\! \defeq\! \{ X\! \in \mathrm{End}(\mc H_L^{\ot k}) : [U^{\ot k}\!,X] = 0, \forall\, U\!\in \mc U \}.\label{eq:comm}
\end{equation}
In most examples we study here, the ensemble $\mc U$ will be the complete set of unitaries obtained by composing local unitaries generated by 2-site Hermitian interactions $\mc G=\{h_{\alpha,ij}\}$ acting on pairs of adjacent sites, denoted $\langle ij\rangle$; such a set is simply the exponential of the Lie algebra generated by the operators $\mc G$~\cite{d2007introduction}, i.e. the \textit{dynamical Lie algebra}:
\begin{equation}
	U=\exp(iH),\quad H\in \mf{Lie}(\mc G).
\end{equation}
In such cases, the $k$-commutant is equivalently the commutant of the $k$-fold replicated generators
\begin{equation}
	h_{\alpha,ij}^{(k)}\defeq\sum_{l=1}^k \1^{\ot (l-1)}\ot h_{\alpha,ij}\ot\1^{\ot (k-l)},
\end{equation}
which generate the tensor product representation of $\mc U$.
\subsection{Effective Hamiltonians}
Since the constraints imposed on $X\in\com{\mc U}$ by Eq.~\eqref{eq:comm} are linear, the problem of characterizing the $k$-commutant can equivalently be expressed by thinking of (super)operators $X\in \mathrm{End}(\mc H_L^{\ot k})$ as states ${\ket{X}\in\mc H_L^{\ot 2k}}$ on twice the number of copies of the Hilbert space.
For example, when $k=1$:
\begin{equation}
	X = \sum_{\mu\nu} X_{\mu\nu} \ket{\mu}\!\bra{\nu} \iff \ket X = \sum_{\mu\nu} X_{\mu\nu} \ket{\mu}\ket{\nu}.\label{eq:vectorization}
\end{equation}
For general $k$, we use the convention that odd copies of the Hilbert space correspond to `ket' states $\ket\psi\in\mc H_L$, and even copies to `bra' states in the dual $\bra\psi\in\mc H_L^*$, where $\mc H_L$ and $\mc H_L^\ast$ are related by complex conjuga\-tion/time reversal; hence the mapping is base-dependent.
For a unitary operator $U$, in this representation we have
\begin{equation}
    [U^{\ot k},X]=0 \iff (U \otimes U^\ast)^{\ot k}\ket X=\ket X,\label{eq:unitary-comm}
\end{equation}
so that $\com{\mc U}$ is the set of states ${\ket{X}\in\mc H_L^{\ot 2k}}$ which are stabilized by $(U \otimes U^\ast)^{\ot k}$ for all $U\in\mc U$.
When $\mc U$ is generated by the Lie algebra $\mf{Lie}(\mc G)$, we can 
also express the commutant in terms of the action of commutators of $\mc G$ on $X$.
These can be expressed through the adjoint ``Liouvillian'' operators as:
\begin{gather}
	\mc L^{(k)}_{\alpha,ij}\!\defeq\!\sum_{l=1}^k\1^{\ot 2(l-1)}\ot(h_{\alpha,ij}\ot\1-\1\ot h_{\alpha,ij}^*)\ot\1^{\ot 2(k-l)}\nn \\
    \mc L^{(k)}_{\alpha,ij}\ket X=|\,[h_{\alpha,ij}^{(k)},X]\,\rangle.
\end{gather}
In that case, we can equivalently formulate Eq.~\eqref{eq:comm} as:
\begin{equation}
\begin{aligned}
	\com{\mc U}\! &=\! \{ \ket X\! \in \mc H_L^{\ot 2k} : \mc L^{(k)}_{\alpha,ij}\ket X = 0, \forall\ h_{\alpha,ij}\!\in \mc G \}\\
	&=\bigcap_{\alpha,\langle ij\rangle}\ker\big(\mc L^{(k)}_{\alpha,ij}\big).\label{eq:comm-ad}
\end{aligned}
\end{equation}
We can then also express the commutant as the ground state space of an effective positive semi-definite Hermitian Hamiltonian~\cite{moudgalya2024}
\begin{equation}
	P^{(k)}=\sum_{\langle ij\rangle} P_{ij}^{(k)},\quad P_{ij}^{(k)} = \sum_\alpha \big(\mc L_{\alpha,ij}^{(k)}\big)^2,
\end{equation}
so that
\begin{equation}
    \com{\mc U}=\ker(P^{(k)})=\bigcap_{\langle ij\rangle}\ker(P_{ij}^{(k)}).
\end{equation}
Note that since the generators $\mc G$ are spatially local, the Hamiltonian $P^{(k)}$ is also local in the spatial direction, although each term $P_{ij}^{(k)}$ couples all copies with each other.
Also note that the symmetries of the effective Hamiltonian $P^{(k)}$ must not be conflated with the symmetries of the replicated unitary operators $U^{\ot k}$, i.e. the elements of the $k$-commutant.
In this picture, the $k$-commutant is the ground state space of $P^{(k)}$, and the symmetries of $P^{(k)}$ -- which we call \textit{replica symmetries} -- will be useful for characterizing it.
Such effective Hamiltonians, sometimes referred to as super-Hamiltonians in the literature, naturally appear in the analysis of the dynamics of various quantities pertaining to operator and entanglement dynamics under noisy Brownian circuit models, and they have been studied in those contexts \cite{enriched-phases,swann2025,fava2024, Zhang2022universal, moudgalya2024, vardhan_entanglement_2024, lastres2026, PhysRevX.15.021020, swann2026continuummechanicsentanglementnoisy, Bernard_2021, 10.21468/SciPostPhys.15.4.175}.

\subsection{Ground State Manifolds}
While identifying specific or even most elements of the $k$-com\-mu\-tant $\com{\mc U}$ is generally straightforward, establishing the \textit{exhaustive} set of such operators can often be extremely challenging.
Here we show that a significant simplification occurs if the effective Hamiltonian $P^{(k)}$ can be proven to be \textit{ferromagnetic}, possibly after a suitable redefinition of the many-body basis.
This condition is defined as
\begin{equation}
    \com{\mc U} = \ker(P^{(k)}) = \text{span}\{\ket{v}^{\ot L}\},\;\;\ket{v}\in\mc H_1^{\ot 2k}. 
\end{equation}
For example, for the $k = 1$ commutant, on a qubit Hilbert space ($\mc H_1 = \mathbb{C}^2$), we have that~\cite{moudgalya2022, moudgalya2022from,  moudgalya2024}
\begin{equation}
    \mrm{Com}^{(1)}_{{\mc U}_{\mb Z_2}} = \text{span}\{\mathds{1}, \Gamma \} = \text{span}\{\ket{I}^{\ot L}, \ket{Z}^{\ot L}\},
    \label{eq:Z2comm}
\end{equation}
where ${\mc U}_{\mb Z_2}$ is the set of unitaries symmetric under the $\mb Z_2$ symmetry operator $\Gamma = \prod_{j=1}^L{Z_j}$, and $\ket{I}, \ket{Z} \in \mH_1^{\ot 2}$ are the vectorizations of the on-site identity $I$ and Pauli $Z$ operators.
Similarly, for $\U(1)$-symmetric operators ${\mc U}_{\U(1)}$, with charge $Z_\mrm{tot}=\sum_{j=1}^L Z_j$, we have~\cite{moudgalya2022, moudgalya2022from, ogunnaike2023unifying,  moudgalya2024}
\begin{equation}
\begin{split}
    \mrm{Com}^{(1)}_{{\mc U}_{\U(1)}} = \text{span}_{n = 0}^{L-1}\{Z_\mrm{tot}^n\} \qquad\qquad\qquad\qquad \\
    \qquad = \text{span}_{\theta, \varphi}\Big\{\Big(\cos\frac{\theta}{2}\ket{I} + e^{i\varphi} \sin\frac{\theta}{2}\ket{Z}\Big)^{\ot L}\Big\}.
\label{eq:U1comm}
\end{split}
\end{equation}
Moreover, for general $k$, the $k$-commutants of the Haar or GUE ensembles of unitary operators without any symmetry, are also of the ferromagnetic form~\cite{PhysRevX.7.031016, Nahum2018, hunterjones2019unitary, hearth2025unitary, vardhan_entanglement_2024}
\begin{equation}
    \com{\mc U_{\rm Haar}} = \text{span}_{\sigma \in S_k}\{\ket{\sigma}^{\ot L}\},
\label{eq:Haarcommutant}
\end{equation}
where each $\ket\sigma$ is a state on $2k$ copies of the local Hilbert space $\mc H_1^{\ot 2k}$, and corresponds to an element of the permutation group $S_k$.
For ferromagnetic Hamiltonians, the ground state space can be fully characterized by a \textit{ground state manifold} given by
\begin{equation}
    M_{\mc U}^{(k)}=\mb P\big\{\!\ket{v}\in \mc H_1^{\ot 2k}:P_{1,2}^{(k)}(\ket{v}\ot\ket{v})=0\big\},
\end{equation}
where $\mb P\{\cdot\}$ denotes the projectivization operation (the condition $P_{1,2}^{(k)}(\ket{v}\ot\ket{v})=0$ is invariant under scalar multiplication $\ket{v}\mapsto\lambda\ket{v}$, so we identify valid local states up to an overall rescaling).
Thus $M_{\mc U}^{(k)}$ is a well-defined compact manifold in the complex projective space $\mathbb{P}(\mc H_1^{\ot 2k})$, independent of system size $L$.
For example, from Eq.~\eqref{eq:Z2comm} and Eq.~\eqref{eq:U1comm}, it is clear that $M_{\mc U_{\mb Z_2}}^{(1)}$ is a discrete manifold with two points, and that $M_{\mc U_{\U(1)}}^{(1)}$ is the Bloch sphere:
\begin{equation}
    M_{\mc U_{U(1)}}^{(1)} \cong \mathbb{CP}^1 = \mb S^2.
\end{equation}
As we will discuss in a separate upcoming work~\cite{lastres2026-2}, the ground state manifold is not always a \textit{manifold} in the mathematical sense, since it can possess singularities, as in $M^{(k)}_{{\mc U}_{\U(1)}}$ for ${k\geq 2}$.
However, as we discuss below, the continuous replica symmetry present in the free-fermion unitary ensembles studied here ensures that their ground state manifolds are smooth and homogeneous.
This homogeneity property will also guarantee that $M_{\mc U}^{(k)}$ is a manifold of generalized coherent states \cite{generalized-coher-st}.
The manifold $M^{(k)}_{\mc U}$ should not be confused with the Lie group manifold that the symmetry generators of $\com{\mc U}$ might form.
In general, $M^{(k)}_{\mc U}$ might be equal or larger.
For instance, in the $\mb Z_2$ example above the two coincide, while in the $\U(1)$ example, the group manifold is a circle $\U(1)\cong\mb S^1$, while the ground state manifold associated to the commutant is a sphere $M_{\mc U_{U(1)}}^{(1)} \cong \mb S^2$.
\subsubsection*{Proving ferromagnetism}
There are various available strategies which allow to prove that a given Hamiltonian has a frustration-free ferromagnetic ground state space.
In this work we will focus on the use of tools from representation theory, which we will discuss in the following sections, and in more detail in Appendix~\ref{app:repr}.
Let us however preview a simple general result on these states.
An important first step towards proving that a Hamiltonian is ferromagnetic is proving that its ground states are invariant under the permutation of the $L$ physical sites, since any state spanned by fully-polarized states of the form $\ket v^{\ot L}$ has this symmetry.
It is easy to show that this symmetry property need only to be verified locally, resulting in the following simple lemma, which we prove in Appendix~\ref{app:proof}.
\begin{lemma}\label{lem}
	Let $\{m_{\alpha}\}$ be a set of two-site operators on $\mc H_\mrm{loc}^{\ot 2}$, and let $m_{\alpha,ij}$ be the operator $m_\alpha$ acting on a pair of adjacent sites $\langle ij\rangle$ in a many-body Hilbert space $\mc H_\mrm{loc}^{\ot L}$.
    Suppose the intersection of the kernels of the operators $\{m_\alpha\}$ only contains states symmetric under exchange:
	\begin{equation}
		\bigcap_\alpha \ker(m_\alpha) \subseteq \mrm{Sym}^2(\mc H_\mrm{loc}).\label{eq:sym-condition}
	\end{equation}
	Then, if adjacent sites $\langle ij\rangle$ form a connected graph, we also have that the states in the common kernel of all $m_{\alpha, ij}$ are globally symmetric, and
	\begin{equation}
		\bigcap_{\alpha,\langle ij\rangle} \!\!\ker(m_{\alpha,ij}) \!=\! \{\ket\psi\in\mrm{Sym}^L(\mc H_\mrm{loc}):m_{\alpha,\tilde\imath\tilde\jmath}\ket\psi=0,\forall\alpha\},
	\end{equation}
	where $\tilde\imath\tilde\jmath$ are any two sites.
\end{lemma}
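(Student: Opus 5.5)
The plan is to show that a state in the common kernel is invariant under every transposition of adjacent sites, then use connectivity of the graph to generate the full symmetric group, and finally use permutation symmetry to transport the local kernel conditions from one pair of sites to any other pair.

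\textbf{Step 1: local exchange symmetry.} Let $\ket\psi\in\bigcap_{\alpha,\langle ij\rangle}\ker(m_{\alpha,ij})$. Fix an adjacent pair $\langle ij\rangle$ and expand $\ket\psi$ in a basis $\{\ket{e_\mu}\}$ of the remaining $L-2$ sites:
\begin{equation}
\ket\psi=\sum_\mu \ket{\phi_\mu}_{ij}\ot\ket{e_\mu}.
\end{equation}
Since $m_{\alpha,ij}$ acts only on sites $i,j$, linear independence of the $\ket{e_\mu}$ turns $m_{\alpha,ij}\ket\psi=0$ into $m_\alpha\ket{\phi_\mu}=0$ for every $\mu$ and $\alpha$. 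By Eq.~\eqref{eq:sym-condition}, each $\ket{\phi_\mu}\in\mrm{Sym}^2(\mc H_\mrm{loc})$. Hence the swap $S_{ij}$ satisfies $S_{ij}\ket\psi=\ket\psi$.

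\textbf{Step 2: global symmetry.} The transpositions $(ij)$ along the edges of a connected graph generate the full symmetric group $S_L$. This is the only step needing a small argument: take a spanning tree, then show by induction, or by conjugating along paths, that any transposition $(ab)$ is a product of edge transpositions, e.g.\ $(ac)=(ab)(bc)(ab)$ along a path. Therefore $\ket\psi$ is invariant under all permutations of the $L$ sites, so $\ket\psi\in\mrm{Sym}^L(\mc H_\mrm{loc})$. It then satisfies $m_{\alpha,\tilde\imath\tilde\jmath}\ket\psi=0$ for any pair of sites: pick an edge $\langle ij\rangle$ and a permutation $\pi$ sending $(i,j)$ to $(\tilde\imath,\tilde\jmath)$, so that
\begin{equation}
m_{\alpha,\tilde\imath\tilde\jmath}=\pi\, m_{\alpha,ij}\,\pi^{-1}.
\end{equation}
Together with $\pi^{-1}\ket\psi=\ket\psi$, this gives $m_{\alpha,\tilde\imath\tilde\jmath}\ket\psi=\pi\, m_{\alpha,ij}\ket\psi=0$. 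This proves the inclusion ``$\subseteq$''.

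\textbf{Step 3: reverse inclusion.} Let $\ket\psi$ be symmetric and annihilated by all $m_{\alpha,\tilde\imath\tilde\jmath}$ for one fixed pair. The same conjugation identity with $\pi$ mapping $(\tilde\imath,\tilde\jmath)$ to any edge $(i,j)$ shows that $m_{\alpha,ij}\ket\psi=0$ for every edge. This also shows that the right-hand side does not depend on the choice of $\tilde\imath\tilde\jmath$.

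\textbf{Main obstacle.} None of the steps is hard. The only point needing care is the ordering convention in $m_{\alpha,ij}$ when $m_\alpha$ is not itself swap-symmetric. If $\pi$ maps $(i,j)$ to $(\tilde\jmath,\tilde\imath)$ instead, compose $\pi$ with the transposition $(\tilde\imath\tilde\jmath)$, which is allowed because $\ket\psi$ is fully symmetric. This way the orientation of the pair never matters.
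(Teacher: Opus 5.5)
Your proposal is correct and follows essentially the same route as the paper's proof: swap invariance on each edge, the fact that edge transpositions of a connected graph generate $S_L$, and conjugation by permutations to move the kernel conditions to an arbitrary pair $\tilde\imath\tilde\jmath$. You also spell out several points the paper leaves implicit: the decomposition over the complementary $L-2$ sites that justifies Step 1, the explicit generation of $S_L$, the reverse inclusion, and the orientation of $m_{\alpha,ij}$ when $m_\alpha$ is not swap-symmetric.
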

As a consequence, if condition \eqref{eq:sym-condition} is satisfied, the ground state space will be independent of the geometry of the lattice bonds $\langle ij\rangle$, provided the lattice is fully connected.
Problems on one-dimensional spin-chains can be reduced to problems with all-to-all interactions and vice versa. 
We will always assume that the bonds $\langle ij\rangle$ in our systems form a connected graph.

\begin{table*}[t] 
\centering
\renewcommand{\arraystretch}{1.4}
\begin{tabular}{c c c c c }
\toprule
\textbf{System} & \textbf{Replica sym.} & \ \ \textbf{$\bm{\dim(\com{\mc U})}$}\ \  & \textbf{Manifold ($\bm{M_{\mc U}^{(k)}}$)} & \textbf{$\bm{\dim(M_{\mc U}^{(k)})}$}\\
\midrule
Matchgates/Majorana F.F. & $\O(2k)$ & $2\times$Eq.~\eqref{eq:maj-dim} & $\OG(k,2k)=\O(2k)/\U(k)$ & $k(k-1)$ \\
+ parity-flip & $\SO(2k)$ & Eq.~\eqref{eq:maj-dim} & $\OG^+(k,2k)=\SO(2k)/\U(k)$ & $k(k-1)$ \\
\midrule
$\SO(N)$ Majorana F.F. & \multicolumn{4}{c}{\scriptsize{(same as the $kN$-th commutant Majorana free fermions)}} \\
\midrule
$\U(1)$ Free Fermions & $\SU(2k)$ & Eq.~\eqref{eq:dim-u1} & $\Gr(k,2k)=\U(2k)/(\U(k)\times \U(k))$ & $2k^2$ \\
\midrule
$\SU(N)$ Free Fermions &  \multicolumn{4}{c}{\scriptsize{(same as the $kN$-th commutant of $\U(1)$ free fermions)}} \\
\bottomrule
\end{tabular}
\caption{Structure of the $k$-commutants studied in this work. In the main text we discuss the connection between the ground state manifolds $M^{(k)}_\mrm{\mc U}$ and the manifold of fermionic Gaussian states, while in Appendix~\ref{app:repr} we build them algebraically.
The replica symmetry is the symmetry group of the effective Hamiltonian, and of the manifold of the $k$-commutant.
This group should not be confused with the $k$-commutant itself, which does contain an $\SO(k)$ or $\SU(k)$ group in the cases without and with particle number conservation~\cite{enriched-phases,swann2025,  fava2024}, but actually forms an associative algebra of operators, which we describe as a single irrep of the replica symmetry group.
Dimensions of the manifolds are given over the field of real numbers.\label{tab:manifolds}}
\end{table*}

\section{Matchgates (Majorana Free Fermions)} 
\label{sec:mgff}
Let us consider a system of $2L$ Majorana modes $\{\gamma_i\}_{i=1}^{2L}$ which satisfy $\gamma\+_i=\gamma_i$ and $\{\gamma_i,\gamma_j\}=2\delta_{ij}\1$. The most general quadratic Hamiltonian takes the form $\frac{i}{2}\sum_{\langle ij\rangle }K_{ij}\gamma_i\gamma_j$, where $K_{ij}$ is a real antisymmetric matrix, so that local generators are 
\begin{equation}
h_{ij}=i\gamma_i\gamma_j.
\end{equation}
These generate the group of fermionic Gaussian unitaries, i.e., the Matchgate group $\mc U_\mrm{MG}\cong \SO(2L)$~\cite{free-fermion-group,matchgateso2n, klebanov2018spectra, pakrouski2020many}. 
\subsection{Effective Hamiltonian}\label{subsec:Majoranaeffectiveham}
Let us now consider the system with $2k$ copies of the Hilbert space.
If we label the copies using the index $a\in\{1,...,2k\}$, we find\footnote{Note that there is not a unique way to deal with the replicas, and the conventions we use differ from the ones in the Refs.~\cite{enriched-phases} and \cite{swann2025}. See also App.~\ref{app:conventions} for a discussion on this.}
\begin{equation}
	\mc L^{(k)}_{ij} \!= i\sum_{a=1}^{2k} \tilde\gamma_i^a\tilde\gamma_j^a,\ \ \text{where }\tilde\gamma_i^a=\begin{cases}
		\gamma_i^a \!\!\!\!&\text{if }a\text{ is odd,}\\
		(\gamma_i^a)^* \!\!\!\!&\text{if }a\text{ is even.}
	\end{cases}\label{eq:majorana-ad}
\end{equation}
For any $a$, the set of operators $\{\tilde\gamma_i^a\}_{i=1}^{2L}$ forms a valid set of Majorana modes, but notice that for regular tensor products, modes on different copies commute instead of anti-commuting.
As we discuss in App.~\ref{app:conventions}, this is not the only possible convention, but it is most natural when the $\{\gamma_i\}_{i=1}^{2L}$ correspond to the fermionization of a spin chain through a Jordan-Wigner transformation.
With this convention, the modes can be made to anti-commute using Klein factors:
\begin{equation}
	\bar\gamma_{i}^a = \left(\prod_{b=1}^{a-1}\tilde\Gamma^b\right)\tilde\gamma_i^a,\quad \tilde\Gamma^b=(-i)^L\prod_{j=1}^{2L}\tilde\gamma_j^b,\label{eq:maj-kelin}
\end{equation}
where $\{\tilde\Gamma^b,\tilde\gamma_i^b\}=0$ and $(\tilde\Gamma^b)^2=1$.
In terms of $\bar\gamma_i^a$, the ex\-pression for $\mc L_{ij}^{(k)}$ does not change, and we obtain~\cite{enriched-phases,swann2025}
\begin{equation}
\begin{gathered}
	P_{ij}^{(k)}=\big(\mc L_{ij}^{(k)}\big)^2 =2k-8\sum_{a<b}^{2k}J^{ab}_iJ^{ab}_j,\\
	\mc L_{ij}^{(k)} = i\sum_{a = 1}^{2k}{\bar \gamma_i^a \bar\gamma_j^a},\quad J^{ab}_i\defn-\frac{i}{2}\bar\gamma_i^a\bar\gamma_i^b. \label{eq:Jabdefn}
\end{gathered}
\end{equation}
where $J^{ab}_i$ are the local $\so(2k)$ generators, which commute on different sites and satisfy the relations
\begin{equation}
	[J_i^{ab}, J_i^{cd}] = i(\delta_{ac} J_i^{bd}+\delta_{bd} J_i^{ac}-\delta_{ad} J_i^{bc}-\delta_{bc} J_i^{ad}).\label{eq:gen-so2k}
\end{equation}
Note that the local chirality operators
\begin{equation}
    \bar\Gamma_i \defeq (-i)^k \prod_{a=1}^{2k} \bar{\gamma}_i^a
\end{equation}
commute with $P^{(k)}$, hence the $2^k$-dimensional local Fock space $\mc H_\mrm{loc}$ on each replicated site splits into two $2^{k-1}$-dimensional subspaces of opposite chirality, which transform as irreducible $\so(2k)$ spinor representations.
To understand the frustration-free ground state of $P^{(k)}$, it is convenient to express the interaction in terms of the $\so(2k)$ Casimir operator
\begin{equation}
	C \defeq \sum_{a<b}^{2k} (J^{ab})^2.
\end{equation}
The Casimir operator is defined for any $\so(2k)$ representation: for the on-site representation we define $C_i$ by choosing the local generators $J^{ab}=J^{ab}_i$, and for a pair of sites we define $C_{ij}$ by setting $J^{ab}=J^{ab}_i+J^{ab}_j$.
We also define the global Casimir operator $C_{\mrm{tot}}$ where $J^{ab}=\sum_i J^{ab}_i$.
In this way we find $P_{ij}^{(k)}=2k+4(C_i+C_j-C_{ij})$.
It is easy to see using Eq.~(\ref{eq:Jabdefn}) that the on-site Casimir operators are just c-numbers $C_i=\frac{1}{4}\binom{2k}{2}$.
Thus the two-site effective Hamiltonian can be written as:
\begin{equation}
	P_{ij}^{(k)}=4k^2-4C_{ij}.
\end{equation}
Hence $P^{(k)}$ is sometimes referred to as the $\SO(2k)$ ferromagnetic Heisenberg model~\cite{fava2023nonlinear, swann2025}, since it is a nearest-neighbor Hamiltonian with terms proportional to the Casimir operator of the $\SO(2k)$ group, in analogy the well-known $\SU(2)$ Heisenberg model.

\subsection{Ground State Manifold}
The $k$-commutant for the free fermion systems under consideration is therefore obtained by maximizing the value of $C_{ij}=C_{ij}^{\mrm{max}}=k^2$ for each bond $\langle ij\rangle$.
The analysis in App.~\ref{app:repr}, which obtains the states that maximize the total Casimir on any $L$ sites, also shows that the value of $C_{ij}$ is maximized by states which are symmetric and have the same chiralities, i.e., $\bar{\Gamma}_i = \bar{\Gamma}_j$.
Through Lemma~\ref{lem}, this implies that the global ground states are also symmetric, and that each $C_{\tilde\imath\tilde\jmath}$ is maximized for any pair of sites $\tilde\imath\tilde\jmath$.
The ground states therefore also maximize $C_\mrm{tot}$, which indeed can be decomposed as a sum of all two-site Casimir operators, which are themselves maximized:
\begin{equation}
	C_\mrm{tot}=\sum_{i< j}^{2L}C_{ij}-L(L-1)k(2k-1),
\end{equation}
Maximizing the value of $C_\mrm{tot}$ selects two irreducible representations $V_{2L}^{2k,\pm}$ in the total Hilbert space, which, as we show in App.~\ref{app:repr}, have the ferromagnetic form
\begin{equation}
    V_{2L}^{2k, \pm} = \text{span}\{\ket{v}^{\ot 2L}:\ket v\in M_\mrm{MG}^{(k,\pm)}\},
\end{equation}
where states in $M_\mrm{MG}^{(k,\pm)}$ have chirality $\bar\Gamma_i = \pm 1$.
Hence the effective Hamiltonian $P^{(k)}$ is ferromagnetic.
The two subspaces $V_{2L}^{2k,\pm}$ can be mapped onto each other using the unitary parity operator ${\bar\Gamma^a=(-i)^L\prod_{i=1}^{2L}\bar\gamma_i^a=\tilde\Gamma^a}$, since it commutes with each $P_{ij}^{(k)}$ and anticommutes with each chirality operator $\bar\Gamma_i$.
This operation extends the $\SO(2k)$ replica symmetry to $\O(2k)$, under which the commutant,
\begin{equation}
\com{\mrm{MG}}\cong V_{2L}^{2k,+}\oplus V_{2L}^{2k,-},   
\end{equation}
forms a single irrep.
The dimensions of the two subspaces are therefore identical, and can be computed through the Weyl formula (cf.~Eq.~\eqref{eq:weyl-dim-formula})
\begin{equation}
	\dim(V_{2L}^{2k,\pm}) = \prod_{1\leq m< n\leq k}\frac{2L+2k-m-n}{2k-m-n}\label{eq:maj-dim}
\end{equation}
where $\dim(\com{\mrm{MG}})=\dim(V_{2L}^{2k,+})+\dim(V_{2L}^{2k,-})$.
Similar to the irreps $V_{2L}^{2k,\pm}$ which compose the ground state space, the ground state manifold splits into two disconnected components of opposite local chiralities
$M^{(k)}_\mrm{MG}=M^{(k,+)}_\mrm{MG} \sqcup M^{(k,-)}_\mrm{MG}$.
Using the fact that $\ker(P_{ij}^{(k)})=\ker(\mc L_{ij}^{(k)})$, we obtain using Eq.~(\ref{eq:Jabdefn}) that
\begin{equation}
	\ket{v}\in\mc H_\mrm{loc}:\ket{v}\in M^{(k)}_\mrm{MG} \iff \Lambda_{2k}(\ket{v}\ot\ket{v})=0
\label{eq:Lambda2kcondition}
\end{equation}
where $\Lambda_{2k}\defeq\sum_{a=1}^{2k}\bar\gamma^a\ot\bar\gamma^a$.
One might recognize this as a well-known condition that completely characterizes fermionic Gaussian states~\cite{lambda-op}.
We therefore find that $M_\mrm{MG}^{(k,+)}$ ($M_\mrm{MG}^{(k,-)}$) \textit{is the manifold of all fermionic Gaussian states of even (odd) parity on $2k$ Majorana modes}.
This characterization exposes a duality between real space and replica space in fermionic Gaussian systems that we will discuss in more detail in Sec.~\ref{sec:finaltrick}.
The generators $J^{ab}_{i}=-\frac{i}{2}\bar\gamma_i^a\bar\gamma_i^b$ of the $\so(2k)$ replica symmetry, are nothing more than quadratic Hamiltonians terms which couple the $2k$ copies of the original Majorana mode $\gamma_i$.
The associated group $\SO(2k)$ is therefore be the group of fermionic Gaussian unitaries on the given site acting across $2k$ replicas, and it acts transitively on the set of Gaussian states of a positive (negative) parity, which form the manifold $M_\mrm{MG}^{(k,+)}$ ($M_\mrm{MG}^{(k,-)}$).
Since the vacuum state is stabilized by the complete set of number-conserving Gaussian unitaries, which forms a $\U(k)$ subgroup~\cite{free-fermion-group}, $M_\mrm{MG}^{(k,+)}$ is the homogeneous manifold~\cite{ORTH-GRASSMANNIAN}
\begin{equation}
	M_\mrm{MG}^{(k,+)}\ \cong\ \SO(2k)/\U(k).
\label{eq:MFFgrass}
\end{equation}
Since the operator $\bar\Gamma^a$ maps product states into product states, the extended $\O(2k)$ symmetry can also be seen as acting on the ground state manifold.
Due to the addition of a discrete parity flipping operator, the action is transitive on the whole $M_\mrm{MG}^{(k)}$, resulting in
\begin{equation}
	M^{(k)}_\mrm{MG}\cong \OG(k,2k)\defeq \O(2k)/\U(k),\label{eq:gsman-maj}
\end{equation}
where $\OG(k,2k)$ is the \textit{orthogonal Grassmannian}, a manifold of (real) dimension $k(k-1)$.
An even more direct connection between this manifold and the set of fermionic Gaussian states on $2k$ Majorana models can be established, and is discussed in App.~\ref{app:grass}.
We can explicitly verify this for small values of $k$.
For $k = 1$, the connected component $\OG^+(1,2)$ is equal to a single point, corresponding to the operator $\1$ in the commutant $\mrm{Com}^{(1)}_{\mc U_{\rm MG}}$, and the point in $\OG^-(1,2)$ corresponds to the parity operator $(-i)^L\prod_{i=1}^{2L}\gamma_i$.
For ${k=2}$ it is the complex projective line ${\mathbb{CP}}^1$, i.e., the sphere $\mb S^2$, associated to the $\SU(2)$ symmetry used in Ref.~\cite{swann2025}, and for ${k=3}$ it is the complex projective space ${\mathbb{CP}}^3$.

\subsection{Extension to Parity-Flipping Gates}
So far we have studied the $k$-commutant of the Matchgate group $\mc U_{\mrm{MG}}\cong \SO(2L)$ generated by quadratic fermion Hamiltonians.
This group can be extended to the group $\mc U_{\rm MG^*}\cong \O(2L)$ by adding a discrete fermionic gate which is parity odd, such as the operator $Q=\gamma_i$~\cite{free-fermion-group}.
When replicated under our conventions, this becomes:
\begin{equation}
	(Q\ot Q^*)^{\ot k} = \bar\gamma^1_i\bar\Gamma^1\cdot \bar\gamma^2_i\cdot \bar\gamma^3_i\bar\Gamma^3\cdot ... \cdot \bar\gamma^{2k}_i.
\end{equation}
The addition of this gate restricts the $k$-commutant to be (cf. Eq.~\eqref{eq:unitary-comm})
\begin{equation}
    \!\!\mrm{Com}^{(k)}_{{\rm MG}^*} \!\!=\! \{\ket{X} \in \mrm{Com}^{(k)}_{\rm MG}\!:\!{(Q\ot Q^*)^{\ot k}\ket X\!=\!\ket X}\}.\!\!\label{eq:mgstar}
\end{equation}
Notice that ${((Q\ot Q^*)^{\ot k})^2=\1}$, so its eigenvalues are $\pm 1$.
To understand this space, we define the unitary operator 
\begin{equation}
W_k\defn e^{i\frac{\pi}{4}\bar\Gamma^1}e^{i\frac{\pi}{4}\bar\Gamma^3}\!...\,e^{i\frac{\pi}{4}\bar\Gamma^{2k-1}},
\end{equation}
which satisfies	$W_k\+(Q\ot Q^*)^{\ot k}W_k = \bar\Gamma_i$, and preserves the space $V_{2L}^{2k,+}\oplus V_{2L}^{2k,-}$, since each $\bar\Gamma^a$ does.
Then we have that
\begin{equation}
\ket X=W\ket Y \in \mrm{Com}^{(k)}_{{\rm MG}^*} \iff \ket Y\in V_{2L}^{2k,+}.
\end{equation}
In other words, for extended matchgates, we get
\begin{equation}
	\com{{\rm MG}^*} \cong W\cdot (V_{2L}^{2k,+}),
\end{equation}
so its dimension is given by Eq.~\eqref{eq:maj-dim}, and with this mapping in mind, we can associate the manifold {of Eq.~(\ref{eq:MFFgrass})} to it, hence we have
\begin{equation}
    {M^{(k)}_{{\rm MG}^*} \cong M^{(k,+)}_{\rm MG} \cong \SO(2k)/\U(k)}.
\end{equation}
Note that the need to introduce the $W_k$ operator is an artifact of the conventions used, since the $\SO(2k)$ replica symmetry only emerges naturally in terms of the \textit{anti-commuting} $\bar\gamma_i^a$ modes; see Appendix~\ref{app:conventions} for a more general discussion on the conventions.

\section{$\U(1)$-Symmetric Free Fermions} 
\label{sec:u1ff}

When charge conservation is enforced, the relevant fundamental modes are spinless fermions $\{c_i, c^\dagger_i\}_{i=1}^L$ which satisfy $\{c_i,c_j\}=0$ and $\{c_i,c_j\+\}=\delta_{ij}\1$.
The Hamiltonian terms are restricted to on-site and hopping terms of the form
\begin{equation}
	\begin{gathered}
    h_i^{(s)} \defn c^\dagger_i c_i - \frac{1}{2},\\
	h_{ij}^{(r)} \defn c^\dagger_i c_j + \text{h.c.},\quad
	h_{ij}^{(c)} \defn i c^\dagger_i c_j + \text{h.c.}
	\end{gathered}
\label{eq:freefermterms}
\end{equation}
These generate the group ${\mc U_\mrm{NC}\cong U(L)}$~\cite{free-fermion-group}.
Notice that since
\begin{equation}
[h_i^{(s)},h_{ij}^{(r)}]=-ih_{ij}^{(c)},\quad [h_i^{(s)},h_{ij}^{(c)}]=ih_{ij}^{(r)}\label{eq:asd-commutator}
\end{equation}
the sets of generators $\{h_i^{(s)},h_{ij}^{(r)}\}_{\langle ij\rangle}$ and $\{h_i^{(s)},h_{ij}^{(c)}\}_{\langle ij\rangle}$ produce the same group as $\{h_i^{(s)},h_{ij}^{(r)},h_{ij}^{(c)}\}_{\langle ij\rangle}$, and for some purposes it is sufficient to work with the smaller set of generators.
\subsection{Effective Hamiltonian}
In the replicated system composed of $2k$ copies of the Hilbert space, we can define the replica modes as follows
\begin{equation}
	\tilde c_i^a=\begin{cases}
		c_i^a \!\!\!\!&\text{if }a\text{ is odd,}\\
		(c_i^a)^T \!\!\!\!&\text{if }a\text{ is even.}
	\end{cases}
\end{equation}
This definition can be derived by transforming $\gamma\mapsto\tilde\gamma$ (cf.~Eq.~\eqref{eq:majorana-ad}) in the decomposition of fermionic operators into pairs of Majorana modes $c_i=\frac{\gamma_{2i-1}+i\gamma_{2i}}{2}$.
The adjoint operators corresponding to the terms in Eq.~\eqref{eq:freefermterms} then take the form\footnote{Note that the $-1/2$ contribution in $\mc L_{s,i}^{(k)}$ does not come from its counterpart in the on-site interaction, but rather from the canonical commutation relations applied to the even copies.}
\begin{equation}
\begin{aligned}
	\mc L_{s,i}^{(k)}&=\sum_{a=1}^{2k}\left(\tilde c_i^{a\dagger}\tilde c_i^a-\frac{1}{2}\right)\!,\\
	\mc L_{r,ij}^{(k)}&=\sum_{a=1}^{2k}(\tilde c_i^{a\dagger}\tilde c_j^a+\mrm{h.c.}),\\
	\mc L_{c,ij}^{(k)}&=\sum_{a=1}^{2k}(i\tilde c_i^{a\dagger}\tilde c_j^a+\mrm{h.c.}).
\end{aligned}\label{eq:u1-ls}
\end{equation}
As we did in Eq.~\eqref{eq:maj-kelin}, we can append the same Klein factors to also switch to $\bar c_i^a$ operators which anti-commute on different replicas, without changing the form of the adjoint operators.
To find $\com{{\rm NC}}$ as expressed in Eq.~\eqref{eq:comm-ad}, we start from the on-site adjoint operator $\mc L_{s,i}^{(k)}$.
Its kernel consists of states where the $2k$ replicas of site $i$ are restricted to the half-filling sector, i.e.,
\begin{equation}
    n_i\defeq \sum_{a=1}^{2k}\bar c_i^{a\dagger}\bar c_i^a=k.
\label{eq:fillingval}
\end{equation}
This condition effectively reduces the local Hilbert space dimension from $\mrm{dim}(\mc H_1^{\ot 2k})=2^{2k}$ to $\mrm{dim}(\mc H_\mrm{loc})=\binom{2k}{k}$. 
For the remaining adjoint operators, we get the following effective Hamiltonian on the reduced Hilbert space~\cite{fava2024}:
\begin{equation}
\begin{gathered}
	P_{ij}^{(k)} =\!\!\! \sum_{\alpha\in\{r,c\}}\!\!\big(\mc L_{\alpha,ij}^{(k)}\big)^2= 2k-4\sum_{a,b=1}^{2k} S^{ab}_i S^{ba}_j \\
    S^{ab}_i\defn\bar c_i^{a\dagger}\bar c_i^b-\frac{1}{2}\delta_{ab}
\end{gathered}	
\end{equation}
where $\{S^{ab}_i\}$ are local $\su(2k)$ generators, which commute on different sites and satisfy the relations
\begin{equation}
	[S_i^{ab}, S_i^{cd}] = \delta_{bc} S_i^{ad} - \delta_{ad} S_i^{cb}.\label{eq:gen-su2k}
\end{equation}
In terms of the $\su(2k)$ algebra, the $\binom{2k}{k}$-dimensional local Hilbert space is an irreducible representation, isomorphic to the $k$-th antisymmetric tensor representation ${\Lambda^k(\mathbb{C}^{2k})}$.
As in the previous section, to study the ground state space of $P^{(k)}$ we introduce the $\su(2k)$ Casimir operator
\begin{equation}
	C \defeq \sum_{a,b=1}^{2k} S^{ab}S^{ba}.
\end{equation}
As in the Majorana case, this Casimir operator is defined for any $\su(2k)$ representation: for on-site we define $C_i$ using the local generators $S^{ab} = S_i^{ab}$, for a pair of sites we get $C_{ij}$ using $S^{ab} = S^{ab}_i + S^{ab}_j$, and globally we get $C_{\tot}$ using $S^{ab} = \sum_i{S^{ab}_i}$.
Since the on-site representation is irreducible, the local Casimir operator  is just a number, which can be explicitly computed to be $C_{i}=k^2+\frac{k}{2}$, so that
\begin{equation}
	P_{ij}^{(k)}=4k(k+1)-2C_{ij}.
\end{equation}
For similar reasons as the Majorana case, $P^{(k)}$ can be seen as an $\SU(2k)$ ferromagnetic Heisenberg model.
\subsection{Ground state manifold}
The $k$-commutant for $\U(1)$-symmetric free fermions is therefore obtained by maximizing the two-site Casimir $C_{ij}=C_{ij}^\mrm{max}={2k(k+1)}$ for each bond $\langle ij\rangle$, where this maximization has been illustrated in App.~\ref{app:repr} for the Casimir on any $L$ sites.
As in the Majorana case, we find that the ground states of the two-site nearest-neighbor term are symmetric. By the application of Lemma~\ref{lem}, this implies the maximization of the Casimir on any pair of sites, which in turn also leads to the maximization of the global Casimir:
\begin{equation}
	C_\mrm{tot}=\sum_{i< j}^{L} C_{ij} - \frac{1}{2}L(L-2)k(2k+1).
\end{equation}
As we discuss in App.~\ref{app:repr}, we find that only a single irreducible representation $V_{L}^{2k}$ in $\mc H_\mrm{loc}^{\ot L}$ maximizes $C_\mrm{tot}$.
This representation is spanned by fully polarized vectors of the form $\ket{v}^{\ot L}$, so $P^{(k)}$ is ferromagnetic, and its ground state space is fully captured by the ground state manifold picture.
The dimension of $\com{\rm NC}\cong V_{L}^{2k}$ is computed through the Weyl formula, and is given by (cf.~Eq.~\eqref{eq:weyl-dim-formula}):
\begin{equation}
\dim(V_{L}^{2k})=\prod_{a=1}^{k}\prod_{b=k+1}^{2k}\frac{L+b-a}{b-a}.\label{eq:dim-u1}
\end{equation}
To find the corresponding ground state manifold $M^{(k)}_\mrm{NC}$, let us write $\mc L_{c,ij}^{(k)}$ in terms of the replica Majorana modes (recall $\bar c_i^a=\frac{\bar \gamma_{2i-1}^a+i\bar \gamma_{2i}^a}{2}$):
\begin{equation}
	\mc L_{c,ij}^{(k)} = \frac{i}{2}\sum_{x=1}^{4k} \bar\gamma_{i}^x\bar\gamma_{j}^x,\ \ \bar\gamma_i^x=\begin{cases}
		\bar\gamma_{2i-1}^x \!\!\!\!&\text{if }x\leq 2k,\\
		\bar\gamma_{2i}^{x-2k} \!\!\!\!&\text{if }x >2k,
	\end{cases}
\end{equation}
where the index $x$ bundles together all modes associated to the same site.
From the condition that $\com{\mrm{NC}}\subseteq\mrm{ker}(\mc L_{c,ij}^{(k)})$, we find again that
\begin{equation}
	\ket v\in\mc H_\mrm{loc}:\ket v\in M^{(k)}_\mrm{NC} \iff \Lambda_{4k}(\ket v\ot\ket v)=0
\label{eq:Lambda4kcondition}
\end{equation}
with the definition $\Lambda_{4k}\defeq\sum_{x=1}^{4k}\bar\gamma^x\ot\bar\gamma^x$.
Therefore, from the Gaussianity condition~\cite{lambda-op} and the condition of Eq.~(\ref{eq:fillingval}), the ground state manifold $M^{(k)}_\mrm{NC}$ \textit{is the manifold of all fermionic Gaussian states on $2k$ sites at half-filling}.
As discussed at the beginning of this section, since we applied the constraints derived from $\mc L_{s,i}^{(k)}$ and $\mc L_{c,ij}^{(k)}$, there is no need to also study $\mc L_{r,ij}^{(k)}$ (cf.~Eq.~\eqref{eq:asd-commutator}).
The $\su(2k)$ replica symmetry is the algebra of generators for quadratic number-preserving Hamiltonians on $\mc H_\mrm{loc}$, and the associated group $\U(2k)$ acts transitively on $M^{(k)}_\mrm{NC}$ (for simplicity we added a phase degree of freedom to the group, which corresponds to adding a trivial identity $\mathds{1}$ operator to the Lie algebra).
Since any half-filled state is stabilized by separately acting with number-conserving Gaussian unitaries on the $k$ empty and $k$ filled modes, which forms the group $\U(k)\times \U(k)$, we find the manifold to be~\cite{GRASSMANNIAN}
\begin{equation}
	M^{(k)}_\mrm{NC}=\Gr(k,2k)=\U(2k)/(\U(k)\times \U(k)),
\end{equation}
i.e., $M^{(k)}_\mrm{NC}$ is a \textit{complex Grassmannian} of (real) dimension~$2k^2$.
For $k=1$ for example, the ground state manifold is exactly the Bloch sphere associated to the $\U(1)$ commutant of Eq.~\eqref{eq:U1comm}.
A direct connection can also be established to the Grassmannian manifold can be established as follows.
Mathematically, $\Gr(k,2k)$ is the manifold that parametrizes all linear subspaces (hyperplanes) $W\subseteq \mb C^{2k}$ with $\mrm{dim}(W)=k$.
Given such a hyperplane $W$ with basis $\{e_1,...,e_k\}$, where $\{e_j\}$ are $2k$-dimensional complex vectors, $W$ is characterized by the antisymmetric exterior product (or ``wedge product'') of all its basis elements $e_1\land\cdots\land e_k$, where $e_i \land e_j = -e_j \land e_i$.
We can identify each basis vector $e_j$ with an orbital with creation operator $\cd_{e_j}$ in the single-particle fermionic Hilbert space of $2k$ fermions; hence the plane $W$ can be identified with a $k$-particle fermionic Gaussian state as
\begin{equation}
    e_1 \land \cdots \land e_k \sim c\+_{e_1}\dots c\+_{e_k}\ket{\mrm{vac}}.\label{eq:grassmannian-half-filling}
\end{equation}
This explicitly shows that the manifold of all Gaussian fermionic states with $k$ complex fermions in a system with $2k$ orbitals is the Grassmannian $\Gr(k,2k)$.
The relation of this manifold to the orthogonal Grassmannian from the Majorana case, is found Appendix~\ref{app:grass}, where we show how the fact that $\com{\mrm{MG}}\subset\com{\mrm{NC}}$ translates into a natural and physically meaningful embedding ${\OG(k,2k)\hookrightarrow\Gr(k,2k)}$.
\section{Free Fermions with Flavor Symmetries} 
\label{sec:flavorff}
It is possible to straightforwardly generalize the results of Sections~\ref{sec:mgff} and~\ref{sec:u1ff} to free fermion systems with larger internal symmetries.
Consider for example the generators of Eq.~\eqref{eq:u1-ls}: these can be seen as Hamiltonian terms describing interactions between fermions $\bar c^a_i$ with an $\SU(2k)$ flavor symmetry.
If we generalize the original system by putting $N$ fermion flavors per site $\{c^\alpha_i\}_{\alpha=1}^N$ with $\SU(N)$-symmetric terms of the form $\sum_{ij,\alpha} T_{ij}c_i^{\alpha\dagger}c_j^\alpha$ with all possible Hermitian $T$, then the adjoint operators take the form
\begin{equation}
	\mc L_{T, ij}^{(k)} = \sum_{a=1}^{2k}\sum_{\alpha=1}^N \left(T_{ij} \bar{c}_i^{\alpha,a\dagger}\bar c_j^{\alpha,a}+\mrm{h.c.}\right)\!,
\end{equation}
which is a fermionic interaction term with an $\SU(2kN)$ flavor symmetry.
The rest then follows in the same way as before, thus showing a correspondence between the $k$-commutant of fermions with all allowed $\SU(N)$ flavor symmetric terms, and the $kN$-commutant of spinless fermions with all allowed $\U(1)$ symmetric terms.
Similarly, the adjoint operators for the Majorana free fermions in Eq.~\eqref{eq:majorana-ad} can be interpreted as interactions between Majorana modes $\bar\gamma_i^a$ with an $\SO(2k)$ flavor symmetry.
The adjoint operators for free Majorana fermions with $N$ modes per site $\{\gamma^\alpha_i\}_{\alpha=1}^N$ and with $\SO(N)$-symmetric terms of the form $i\sum_{ij,\alpha}K_{ij}\gamma_i^{\alpha}\gamma_j^\alpha$ with all possible real antisymmetric $K$, are then
\begin{equation}
	\mc L^{(k)}_{K,ij} = i\sum_{a=1}^{2k}\sum_{\alpha=1}^{N} K_{ij}\bar\gamma_i^{\alpha,a}\bar\gamma_j^{\alpha,a}.
\end{equation}
This again can be interpreted as a free Majorana system with a larger $\SO(2kN)$ flavor symmetry, showing that the $k$-commutant of Majorana fermions with an $\SO(N)$ flavor symmetry is isomorphic to the $kN$-th commutant of regular Majorana fermions.
A summary of the $k$-commutants for all systems analyzed in this work can be found in Table~\ref{tab:manifolds}.

\section{Coherent State Resolution of the Commutant Projector} 
\label{sec:finaltrick}
\subsection{Commutant Projectors}
The $k$-commutant $\com{\mc U}$ is intimately connected to the $k$-th moments of the associated unitary group $\mc U$.
For any unitary group $\mc U$, the $k$-th moments are obtained using the \textit{twirling operation}, given by \cite{Mele2024introductiontohaar}
\begin{equation}
	\Phi^{(k)}_{\mc U} = \int_{\mc U} \dd U\, (U \ot U^*)^{\ot k},\label{eq:twirl}
\end{equation}
which can be interpreted as the $k$-th moment of a random unitary operator distributed according to the Haar measure on the group $\mc U$.
Note that for our purposes, it will be convenient to interpret $\Phi^{(k)}_{\mc U}$ as an operator acting on $2k$ copies of the Hilbert space.
Twirling operations naturally appear in the statistical analysis of quantum states, e.g., given a density matrix $\rho$, its twirl $\Phi^{(k)}_{\mc U}\ket{\rho}^{\ot k}$ in the vectorized Hilbert space encodes the $k$-th moment of the state's orbit under the action of the group $\mc U$. 
For instance, if $\mc U =\mc U_{\rm MG}$ is the group of fermionic Gaussian unitaries and $\ket{0}$ is a reference fermionic Gaussian state, then the projected state $\Phi^{(k)}_{\rm MG}\ket{0}^{\ot 2k}$ describes the $k$-th moment of the ensemble of random Gaussian states.
From the components of this state, one can compute the average over the ensemble of any observable which depends linearly on $\rho^{\ot k}$: examples include many-point correlators and measures for various quantum information resources, such as entanglement and non-stabilizerness \cite{Tarabunga2024criticalbehaviorsof,Haug2023stabilizerentropies,Chan_2024,Varikuti2024unravelingemergence}.
It is well-known~\cite{Collins_2006, HarrowLow2009, brandaoharrowhorodecki2016} that this twirling operators on $\mc U$ are exactly the orthogonal projectors onto the $k$-commutants as defined in Eq.~(\ref{eq:comm}):
\begin{equation}
    \Phi^{(k)}_{\mc U} \equiv \Pi^{(k)}_{\mc U},
\label{eq:twirlcomm}
\end{equation}
and we will use the two interchangeably.
Hence knowing the $k$-commutant generally allows for a simpler expression for this operation.
For example, when $\mc U$ is the complete set of Haar random unitaries on the Hilbert space $\mH_1^{\ot L}$, the commutant $\com{\rm Haar}$ is given by Eq.~(\ref{eq:Haarcommutant}), and the projector onto it reads \cite{Collins_2006,Mele2024introductiontohaar}
\begin{equation}
    \Pi^{(k)}_{\rm Haar} = \sum_{\sigma, \tau \in S_k}{\text{Wg}(\sigma^{-1} \tau)\ket{\sigma}^{\ot L} \bra{\tau}^{\ot L}},
\label{eq:Haarprojector}
\end{equation}
where the $\text{Wg}(\cdots)$ is the appropriate Weingarten function of the permutations, which appear essentially due to the fact that the elements $\ket\sigma^{\ot L}\in\com{\rm Haar}$ labeled by permutations in $S_k$ are not orthogonal. 
This leads to the powerful machinery of ``Weingarten calculus'' that enables numerous computations of various properties of random unitaries and states~\cite{ Collins_2010, PhysRevX.7.031016, hunterjones2019unitary, keyserlingk2018hydro, Braccia_2024,Zhou_2019}.
\subsection{Generalized Coherent States}
But having realized the commutant as the ground state of a ferromagnetic Hamiltonian gives us more options.
As discussed in the previous sections, the fact that $\com{\mc U}$ is spanned by fully-polarized vectors $\ket v^{\ot n}$ (where $n$ is the system size, which is $2L$ in the Majorana case or $L$ in the number-conserving case).
This set of states is invariant under the action of a Lie group $G$ [$\O(2k)$ or $\SU(2k)$ without or with particle number conservation], which also endows the  free-fermion $k$-commutants with the structure of a homogeneous symmetric manifold of the form
\begin{equation}
    M_{\rm FF}^{(k)}\cong G/H, 
\end{equation}
where FF indicates the matchgate or $\U(1)$-conserving free-fermion groups, and $H \subset G$ a Lie subgroup [$\U(k)$ or $\mrm S(\U(k) \times \U(k))$ respectively].
This structure means that the set of fully polarized ground states $\ket v^{\ot n}$ is a family of \textit{generalized coherent states}~\cite{generalized-coher-st}, which provides a projection formula through the associated resolution of identity:
\begin{equation}
\begin{split}
	\frac{1}{\mc D_k}\Pi_{\rm FF}^{(k)} = \int_{M^{(k)}_{\rm FF}} \dd v \,(\ketbra{v}{v})^{\ot n} = \qquad\qquad\qquad \\
	\qquad = \int_{G} \dd g\, \big(\,U_g\ket{v_0}\!\big)^{\ot n}\big(\!\bra{v_0}U_g\+\,\big)^{\ot n}\label{eq:coherent-state-projection}
\end{split}
\end{equation}
where $\mc D_k=\mrm{dim}(\com{\rm FF})$ and $\ket {v_0}\in M_{\rm FF}^{(k)}$ is a reference single-site coherent state.
While the expression for the commutant projector in Eq.~(\ref{eq:coherent-state-projection}) looks similar to the twirling operation in Eq.~(\ref{eq:twirl}), there are some important differences.
First, the domain of integration in Eq.~(\ref{eq:coherent-state-projection}) does not scale with system size $n$ as in Eq.~(\ref{eq:twirl}), but rather with the replica number $k$.
Further, the coherent states over which we integrate simply factorize along the spatial direction; therefore the complexity of expressions of the form $\bra{\Psi_l}\Pi_{\mc U}^{(k)}\ket{\Psi_r}$, which naturally appear in the evaluation of various physical quantities, remains constant in system size, as long as the boundary states $\ket{\Psi_{l,r}}$ are spatially local. 
For large system sizes, these types of integrals generally become amenable to saddle-point asymptotic approximations, due to the large exponent $n$.
However, the families of states over which Eq.~(\ref{eq:twirl}) and~(\ref{eq:coherent-state-projection}) integrate are in some sense dual: in the twirl we average $2k$ replicas of fermionic Gaussian states on $n$ modes, while in the coherent-state resolution we have a tensor product of $n$ identical fermionic Gaussian states on $2k$ modes.
An alternative projection formula onto the commutant, which more closely resembles Eq.~\eqref{eq:twirl} and better illustrates the duality between the replica and spatial directions, can be derived directly from the Schur character orthogonality \cite{hall2015lie,fulton2011representation,knapp2001lie}. For a compact Lie group $G$, the projector onto an irreducible representation is given by integrating the group action weighted by the character:
\begin{equation}
    \frac{1}{\mc D_k}\Pi_{\mc U}^{(k)} =\int_G \dd g \, \chi_{V_G}^*(g)\, U_g^{\ot n}
\end{equation}
where the character $\chi_{V_G}(g)$ is the trace of $U_g^{\ot n}$ restricted to the commutant, and can be computed directly from the highest weight associated to $V_G$ (cf.~Appendix~\ref{app:repr}).
\subsection{Application: Free Fermion Page Curves}
As an example, in Appendix~\ref{app:conti} we apply the coherent state projection formula Eq.~\eqref{eq:coherent-state-projection} to the computation of \textit{free fermion Page curves}~\cite{PhysRevB.103.L241118,PhysRevB.104.214306,PRXQuantum.3.030201,PhysRevResearch.5.013044} of the $k$-th Renyi entropies for large system size $L$.
We define the $k$-th purity as
\begin{gather}
	E_k(\ell) = \tr_A(\rho_A^k),\qquad \rho_A=\tr_{\bar A}(\rho),
\end{gather}
for a bipartition $\bar A=\{1,...,\ell\}$ and $A=\{\ell +1,...,L\}$, and compute the form of the function
\begin{equation}
    -\frac{1}{L}\frac{1}{k-1}\log(\overline{E_k(\ell)})=f_k(r)+ \mc O\left(\frac{\log L}{L}\right)
\end{equation}
where $r=\ell/L$ and the average is over fermionic Gaussian states, thus performing the \textit{annealed} average of the $k$-th Rényi entropy $S_k(\ell)\defn \frac{1}{1-k}\log(E_k(\ell))$ to leading order in $L$.
The quantity $\overline{E_k(\ell)}$ can be expressed as the overlap of the $k$-th moment of the Gaussian state ensemble $\Pi^{(k)}_\mrm{MG}\ket 0^{\ot 2k}$ with a domain-wall configuration $|\1_{\bar A}^e\1_{A}^\eta\rangle$ which describes the partial trace operation \cite{PhysRevX.7.031016,Zhou_2019, vardhan2021entanglement, vardhan_entanglement_2024}:
\begin{equation}
    \overline{E_k(\ell)} = \bra{\1_{\bar A}^e\1_{A}^\eta} \Pi^{(k)}_\mrm{MG}\ket 0^{\ot 2k}; 
\end{equation}
the precise forms of the expressions are in App.~\ref{app:conti}.
The saddle-point approximation applied to the resulting integral for large $L$ provides the solution Eq.~\eqref{eq:solution}, which implies
\begin{equation}
\begin{split}
    f_k(r)=-\frac{1}{k-1}\sum_{p>0}^{\frac{k-1}{2}} \big( r \log(\cos^2\theta_p^*)+\qquad\qquad\\
    \qquad+(1-r)\log(\sin^2(\theta_p^* - \omega_p/2)) \big)
\end{split}
\end{equation}
where $\theta_p^*$ solves the saddle-point equation Eq.~\eqref{eq:saddlepointeq} and $\omega_p$ is as in Eq.~\eqref{eq:omegapdef}.
We also compute $\overline{E_2(\ell)}$ exactly, thus replicating a result derived Ref.~\cite{lastres2026} using an orthonormal basis for the 2-commutant of matchgates.
\subsection{Alternate Parametrizations of Coherent States}
In the computations presented in App.~\ref{app:conti}, we perform the integral over the group $G\cong \O(2k)$ in Eq.~\eqref{eq:coherent-state-projection}, where the integration measure is simply the standard Haar measure.
However, the other representation in Eq.~\eqref{eq:coherent-state-projection}, i.e., the integral over the manifold $M_{\rm FF}^{(k)}$, can prove more useful in other contexts, so here for completeness we present an overview of this representation.
The integration over the manifold $M^{(k)}_{\rm FF}$ is done by first choosing a reference state $\ket{v_0}$, and then parametrizing the generalized coherent states using a $k\times k$ complex matrix $Z_{mn}$ \cite{alma991013887169706535,Berezin:1978sn,hua1963harmonic}. 
In the Matchgate case, where the coherent states are fermionic Gaussian states, $Z$ is an antisymmetric matrix ($Z^T = -Z$), and the coherent states take the form
\begin{equation}
	\ket{v} \propto \exp(\frac{1}{2}\sum_{m,n=1}^k Z_{mn} c^\dagger_m c^\dagger_n ) \ket{v_0}\label{eq:Zmatrix}
\end{equation}
for some choice of complex fermionic operators $\{c_m\}$ on the $2k$ Majorana modes.
This parametrization has $k(k-1)$ real degrees of freedom, which matches the dimension of the manifold $M_\mrm{MG}^{(k)}=\OG(k,2k)$.
Since this is a disconnected manifold, to reach all states one will actually need to choose two reference states $|v_0^{\pm}\rangle$ of opposite parity.
Similarly, in the $\U(1)$-conserving case, the relevant states in the $k$-commutant are half-filled Gaussian states on $2k$ modes, parametrized by a generic $k \times k$ complex matrix $Z$ yielding $2k^2$ degrees of freedom [same as the dimension of $M_\mrm{NC}^{(k)}=\Gr(k,2k)$].
If we choose $\ket{v_0}=c\+_1 \cdots c\+_k\ket{\mrm{vac}}$, then we have:
\begin{equation}
	\ket{v} \propto \exp(\sum_{m,n=1}^k Z_{mn} c^\dagger_{m+k} c_{n}) \ket{v_0}.
\end{equation}
With these parametrizations, the integration measures over these coherent states take the forms:
\begin{equation}
\begin{gathered}
	(\dd v)_{\rm MG}\propto\frac{1}{\det(\1_k+ZZ\+)^{k-1}}\prod_{1\leq m<n\leq k}\dd^2 Z_{mn},\\
	(\dd v)_{\rm NC}\propto\frac{1}{\det(\1_k+ZZ\+)^{2k}}\prod_{m,n=1}^k\dd^2 Z_{mn},
\end{gathered}
\end{equation}
where we omitted normalization factors, which are rather messy but however known in the literature \cite{alma991013887169706535,hua1963harmonic}.
\section{Discussion and Outlook} 
\label{sec:outlook}
In this work, we studied the $k$-commutants of multiple classes of free-fermion unitaries, such as matchgates, which are generated by quadratic Majorana terms with and without parity flipping terms, and particle number conserving free-fermion terms. 
We expressed the $k$-commutant as the ground state of an effective Hamiltonian acting on $2k$ copies of the system, and showed that the effective Hamiltonian is an $\SO(2k)$- or $\SU(2k)$-symmetric ferromagnetic model, similar to the ones that have appeared in the context of noisy quantum circuit dynamics~\cite{enriched-phases,swann2025,fava2024}.
To describe the $k$-commutant, we exactly solved for the ground states of this model, and showed that they are in direct correspondence with the manifold of fermionic Gaussian states on $2k$ sites, which are given by Grassmannians.
Moreover, we showed that the projector onto the $k$-commutant can be expressed as an integral over generalized coherent states parametrized by these manifolds, which allowed us to use the associated resolution of identity to express averages of physical quantities (such as the Rényi entanglement entropies) over the complete set of free-fermion states, without relying on random matrix theory or on orthonormal bases for the $k$-commutant.
This geometric structure makes the free-fermion commutants drastically different from many other well-known commutants in the literature such as the Haar or Clifford commutants, whose geometry is trivial (i.e., a discrete set of points), and which then necessitates careful orthonormalization of elements for construction of projectors onto the commutants, leading to techniques such as the Weingarten calculus~\cite{Collins_2006, Mele2024introductiontohaar}.
While these coherent state integrals can also be hard to evaluate exactly for arbitrary values of the replica number $k$, we expect two kinds of simplifications. 
First, the complexity of these integrals grows only with $k$ rather than system size $L$.
Second, they can be computed for large $L$ using saddle point methods similar to the ones demonstrated here, which is usually the regime of interest from a many-body perspective. 
There might also be efficient numerical methods to evaluate these integrals by sampling the coherent state efficiently, whose complexity should again only growth with $k$, and it would be interesting to explore such ideas in future work.
The relatively simple structure of the integral projection formula, also hints at several promising future directions, such as analyzing averages of other quantities such as entanglement negativity and non-stabilizerness, studying averages over the particle-conserving free-fermions ensemble, and possibly performing the replica limit~\cite{PhysRevX.7.031016, Zhou_2019} for the quenched averages for the Rényi or von Neumann entropies as $k\rt 1$.
On the physical front, this geometric understanding of the commutants as ground state manifolds should lead to  direct applications for the dynamics of noisy random circuits~\cite{lashkari2013towards, sunderhauf2019quantum, bauer2017stochastic, ogunnaike2023unifying, enriched-phases, swann2025,  moudgalya2024, fava2024, fava2023nonlinear}.
The effective Hamiltonians used in this work naturally appear in such settings, and the averaged late-time dynamics of various physical quantities are governed by low-energy spin-wave-like excitations on top of the appropriate $k$-commutants~\cite{ogunnaike2023unifying, moudgalya2024}.
Such effective Hamiltonians were analyzed using field theories in \cite{swann2025, fava2023nonlinear, fava2024, PhysRevX.15.021020, Zhang2022universal}, and in many cases with the addition of measurements and postselection on top of the unitary dynamics.
It would be interesting to check if any additional insights or simplifications can be obtained using our interpretation of their ground states in the unitary limit terms of fermionic Gaussian states, and how this structure changes with the addition of measurements.
This geometric manifold of the commutant should also be relevant for the study of the dynamical effect of interacting \textit{impurities} to the system, which should lead to interesting hydrodynamics features similar to such effects observed in the context of continuous symmetries~\cite{li2025dynamics}.  
On the mathematical front, it would also be interesting to explore the geometry of various other naturally occurring commutants. 
One class of interesting commutants might be subsets of matchgates, such as dropping the on-site terms in the $\U(1)$ cases, which appear in many natural physical settings. 
We know that such changes in the ensemble should enlarge the commutants~\cite{moudgalya2022from}, and it would be interesting to check if such cases also map onto manifolds like the ones that we have found here.
Another class of commutants would be the $k$-commutants for $k \geq 2$ under the addition of interactions to the unitaries. 
While their algebraic structure has been characterized in many cases, we find -- as we will discuss in an upcoming work \cite{lastres2026-2} -- that they still have interesting geometric structures leading to manifolds which are not homogeneous, but rather have singularities, a fact that has implications for the entanglement dynamics of such systems.
On a different note, the proof technique used here for the $k$-commutants of matchgate, by showing their ferromagnetic nature, might be generalizable to other $k$-commutants of interest. 
For example, we also observe that the Clifford commutants~\cite{bittel2025clifford} appears to have a ferromagnetic structure although with a discrete manifold, but it might be interesting in future work to check if they can be understood as ground states of some simple Hamiltonians, which might provide a simpler proof of the Clifford commutant.

\textit{Note added} -- After having obtained many of these results in a different context, we became aware of Ref.~\cite{poetri2026}, which shows that $\com{{\rm MG}^*}$ of Eq.~\eqref{eq:mgstar} is algebraically generated by an $\SO(k)$ group of operators.
While we were writing the manuscript, Ref.~\cite{larocca2026} appeared, which computes $\com{{\rm MG}}$ and $\com{{\rm NC}}$ in a similar framework as \cite{poetri2026}.
All our results are obtained independently with very different methods, and agree where there is overlap: in particular Eq.~\eqref{eq:maj-dim} can be shown to be equal to Eq.~(54) in \cite{poetri2026} and to be consistent with Eq.~(59) in \cite{larocca2026}.
Eq.~\eqref{eq:dim-u1} can be shown to be equal to Eq.~(32) in \cite{larocca2026}.

\begin{acknowledgments}
We particularly thank Poetri Tarabunga for bringing  this problem to our attention.
We are grateful to Lesik Motrunich and Poetri Tarabunga for useful discussions. We thank Frank Pollmann for collaboration on \cite{moudgalya2022}, and S.M. thanks Lesik Motrunich for collaboration on \cite{moudgalya2024} and Shreya Vardhan for collaboration on \cite{vardhan_entanglement_2024}.
We acknowledge support from the Munich Quantum Valley, which is supported by the Bavarian state government with funds from the Hightech Agenda Bayern Plus, and the Munich Center for Quantum Science and Technology (MCQST), supported by the Deutsche Forschungsgemeinschaft (DFG, German Research Foundation) under Germany’s Excellence Strategy--EXC--2111--390814868.
\end{acknowledgments}

\bibliography{refs-ffgs, newbib}

\onecolumngrid
\appendix

\section{Symmetry Lemma for the Ground state space} 
\label{app:proof}
We prove here Lemma~\ref{lem}, introduced in the main text.
While simple, it is very useful, since it allows one to prove a result on the structure of the global ground state by verifying a condition that can be checked locally on pairs of nearest-neighboring sites.
\begin{proof}
	[Proof of Lemma~\ref{lem}] Let $\Sigma_{ij}$ be the swap operator on sites $ij$, which acts on $\mc H_\mrm{loc}^{\ot L}$ as
	\begin{equation}
		\Sigma_{ij}(...\ket{r}_i\ket{s}_j...)=(...\ket{s}_i\ket{r}_j...).
	\end{equation}
	On two sites ($L=2$), $\mrm{Sym}^2(\mc H_\mrm{loc})$ -- the space of states symmetric under the exchange -- is exactly the $+1$ eigenspace of $\Sigma_{12}$.
    This means that $\mc K_{ij}=\bigcap_\alpha\ker(m_{\alpha,ij})$ is contained in the $+1$ eigenspace of $\Sigma_{ij}$.
    Therefore $\bigcap_{\alpha,\langle ij\rangle}\ker(m_{\alpha,ij}) = \bigcap_{\langle ij\rangle}\mc K_{ij}$ is contained in the $+1$ eigenspace of $\Sigma_{ij}$ for all adjacent $\langle ij\rangle$.
    If the adjacent sites forms a connected graph, then $\{\Sigma_{ij}\}_{\langle ij\rangle}$ generate the whole group of permutations $S_L$, and for any $\Sigma\in S_L$, its $+1$ eigenspace will fully contain $\bigcap_{\alpha,\langle ij\rangle}\ker(m_{\alpha,ij})$.
    Therefore $\bigcap_{\alpha,\langle ij\rangle}\ker(m_{\alpha,ij})$ is a subset of $\mrm{Sym}^L(\mc H_\mrm{loc})$.
    For a state $\ket\psi\in\mrm{Sym}^L(\mc H_\mrm{loc})$, belonging to $\bigcap_{\alpha,\langle ij\rangle}\ker(m_{\alpha,ij})$ means that:
	\begin{eqnarray}
		\forall\alpha,\langle ij\rangle: m_{\alpha,ij}\ket\psi =0,
	\end{eqnarray}
	but since for any $\Sigma\in S_L$ we have $\ket\psi=\Sigma\ket\psi$, we find the equivalent condition:
	\begin{eqnarray}
		0 = \Sigma\+ m_{\alpha,ij}\ket\psi =\Sigma\+ m_{\alpha,ij}\Sigma\ket\psi = m_{\alpha,\tilde\imath\tilde\jmath}\ket\psi,
	\end{eqnarray}
	where $\tilde\imath\tilde\jmath$ can be chosen arbitrarily, independently of $\langle ij\rangle$.
    This completes the proof.
\end{proof}

\section{Representation theory of Semisimple Lie Algebras and the Ground State Manifold} 
\label{app:repr}
In this appendix, we review standard elements of the representation theory of semisimple Lie algebras, which provide the mathematical foundation for the geometric properties of the $k$-commutants discussed in the main text.
For a comprehensive treatment, we refer the reader to textbooks on the subject \cite{fulton2011representation, knapp2001lie, hall2015lie}. A reader familiar with the subject may proceed directly to Section~\ref{app:reprcomm} for details on the $\so(2k)$ and $\su(2k)$ representations of interest.
\subsection{Review of basic concepts}
In the following, we will discuss the representation theory of a semisimple Lie algebra $\mf g$ with a Cartan subalgebra (i.e. a maximal Abelian Lie subalgebra) $\mf h \subset \mf g$.
The dimension of $\mf h$ is referred to as the \textit{rank} of $\mf g$, denoted by $r$.
Strictly speaking, the discussion below holds in general only when $\mf g$ is a \textit{complex} Lie algebra, whereas the examples we are interested in [$\so(2k)$ and $\su(2k)$] are real compact Lie algebras.
However, the same framework carries over to the cases we are interested in due to reasons we discuss below in Sec.~\ref{subsubsec:realcomplex}.
\subsubsection{Weights and Roots}
For any finite-dimensional representation $V$ of $\mf g$, the action of $\mf h$ is diagonalizable, and the simultaneous eigenspaces of $Y\in\mf h$ are called \textit{weight spaces}:
\begin{equation}
    V_\lambda = \{v \in V \mid Y v = \lambda(Y) v,\ \forall Y \in \mf h\},
\label{eq:wtdefn}
\end{equation}
where $\lambda \in \mf h^*$ is a linear map from $\mf h$ to $\mb C$, and is called a \textit{weight} of the representation.
Weights form a discrete set $w(V)$ in $\mf h^*$.
The non-zero weights of the adjoint representation $X\mapsto \mrm{ad}_X(\cdot)=[X,\cdot\,]\in\mrm{End}(\mf g)$ are called \textit{roots}, denoted by the set $\Phi$, and they label specific elements $X_\alpha \in \mf g$ satisfying $[Y, X_\alpha] = \alpha(Y) X_\alpha$ for all $Y \in \mf h$.
Note that while weights are dependent on the representation $V$, roots are intrinsic objects which only depend on the algebra $\mf g$, since they are weights of in a fixed (adjoint) representation.
Together with a basis $\{Y_m\}$ for $\mf h$, the set of all root generators $\{X_\alpha\}$ forms the Cartan-Weyl basis of $\mf g$ (up to normalization).
In the familiar context of $\su(2)$ (i.e., $\mf{sl}_{\mb C}(2)$), $\mf h$ is spanned by $S^z$, and the root generators are the raising and lowering operators $S^+$ and $S^-$, associated to a positive and a negative root in the one-dimensional $\mf h^*$.
More generally, for a Lie algebra of rank $r$, these roots can be represented as $r$-tuples of elements in $\mathbb{C}$ once a basis on $\mf h^\ast$ is assigned.
It is also known that for any finite-dimensional representation, all weights live in the real vector space $\mrm{span}_{\mb R}(\Phi)$, so that in the following we will not have to worry about the complex nature of $\mf h^*$.
By choosing a codimension 1 hyperplane in $\mf h^*$ that contains no roots, we can split the root system $\Phi$ into positive roots $\Phi^+$ and negative roots $\Phi^-$.
This induces a partial ordering on the weights, where $\lambda \succeq \mu$ if and only if $\lambda - \mu\in \mf h^*$ is a linear combination of positive roots with non-negative coefficients.
In any representation $V$, the generators associated to positive roots $\{X_\alpha\}_{\alpha \in \Phi^+}$ can be thought of as generalized ``raising operators'', which map vectors in the weight space $V_\lambda$ to $V_{\lambda+\alpha}$.
A \textit{highest weight} $\bar\lambda\in w(V)$ is defined as a weight that is not smaller than any other weight, and a \textit{highest weight vector} is any element $v_{\bar\lambda}\in V_{\bar\lambda}$, which is therefore  annihilated by all raising operators $X_\alpha v_{\bar\lambda} = 0$.
The dimension $\dim(V_{\bar\lambda})$ is the multiplicity $m_{\bar\lambda}$ of $\bar\lambda$, and if $V$ is irreducible, the highest weight is unique and its multiplicity is 1.
More generally, the theorem of highest weight implies that, given a finite-dimensional representation $V$ of $\mf g$, it decomposes into a direct sum of irreps $\{I(\bar\lambda)\}$ labeled by its highest weights (with multiplicity):
\begin{equation}
	V\cong\!\!\bigoplus_{\bar\lambda\in w_\mrm{hig.}(V)}\!\!m(\bar\lambda)I(\bar\lambda).
\end{equation}
In the case of $\su(2)$, the highest weight labeling a spin-$j$ irrep corresponds to the state $\ket{j,j}$ which maximizes $S^z$ and is annihilated by $S^+$.
Due to irreducibility, the subspace associated to a given $\bar\lambda$ is obtained by linear action of the algebra $\mf g$ on the highest weight vectors $v_{\bar\lambda}$:
\begin{equation}
	m(\bar\lambda)I(\bar\lambda)\cong \mrm{span}\{Xv_{\bar\lambda} : v_{\bar\lambda}\in V_{\bar\lambda},\ X\in\mf g\},\label{eq:spanall}
\end{equation}
just as all states in a spin-$j$ irrep of $\su(2)$ can be obtained by acting the Lie algebra on the highest weight state in it.
If the representation $V$ is a tensor product of smaller representations $V=\bigotimes_{i=1}^L V_i$, where $\mf g$ acts as $X=\sum_{i=1}^L X_i$, one can easily see that its weights are precisely given by all possible sums of weights in each factor:
\begin{equation}
	\forall\lambda\in w(V):\lambda = \sum_{i=1}^L (\lambda)_i,\quad (\lambda)_i\in w(V_i).
\label{eq:factordecomp}
\end{equation}
Again, note that on the other hand the roots do not change, since they are independent of the representation.
\subsubsection{Casimir elements and the Weyl dimension formula}
It is useful to also endow $\mf h^*$ with an inner product.
This is naturally derived from the Killing form, defined as $\kappa(X,Y) = \tr(\mrm{ad}_X \mrm{ad}_Y)$, where $\mrm{ad}_X$ is the adjoint representation of $X \in \mf g$, and for semisimple Lie algebras this is a non-degenerate bilinear inner product.
This form establishes a canonical isomorphism between $\mf h$ and its dual space $\mf h^*$: to every weight $\lambda \in \mf h^*$, we can associate a unique element $H_\lambda \in \mf h$ such that $\lambda(Y) = \kappa(H_\lambda, Y)$ for all $Y \in \mf h$.
This correspondence induces a natural inner product on the space of weights, defined by\footnote{For the compact real Lie algebras relevant to our study in the main text, this inner product is real and positive-definite, which means that $\mf h^*$ possesses a Euclidean product.} 
\begin{equation}
    ( \lambda, \mu ) \defeq \kappa(H_\lambda, H_\mu) = \lambda(H_\mu) = \mu(H_\lambda).
\label{eq:proddefn}
\end{equation}
With these notions, we can introduce the quadratic Casimir element $C$ (formally, an element of the universal enveloping algebra $U(\mf g)$), defined as $C = \sum_a X_a X^a$, where $\{X_a\}$ and $\{X^a\}$ are dual bases of $\mf g$ with respect to the Killing form.
In a Cartan-Weyl basis $\{Y_m,X_\alpha\}$, where $\{Y_m\}$ are chosen to be orthonormal and $\kappa(X_\alpha,X_{-\alpha})=1$, we can write:
\begin{equation}
	C=\sum_m Y_m^2+\sum_{\alpha\in\Phi} X_\alpha X_{-\alpha}.
\end{equation}
Since $C$ commutes with all elements of $\mf g$, Schur's lemma implies that its restriction to any irrep $I(\bar\lambda)$ is just a scalar multiple of identity $C(\bar\lambda) \1_{I(\bar\lambda)}$.
The value of the Casimir element on an irrep $I(\bar \lambda)$ is given by
\begin{equation}
    C(\bar\lambda) = ( \bar\lambda, \bar\lambda + 2\rho ),\;\;\;\rho \defn \frac{1}{2}\sum_{\alpha \in \Phi^+} \alpha
\label{eq:casimirexpression}
\end{equation}
where $\rho$ is known as the Weyl vector, and $\Phi^+$ are the positive roots.
Furthermore, the dimension of the irrep $I(\bar\lambda)$ is also computed from its highest weight using the Weyl dimension formula:
\begin{equation}
    \dim(I(\bar\lambda)) = \prod_{\alpha \in \Phi^+} \frac{( \bar\lambda + \rho, \alpha )}{( \rho, \alpha )}. \label{eq:weyl-dim-formula}
\end{equation}
The expressions in Eqs.~\eqref{eq:maj-dim} and \eqref{eq:dim-u1} in the main text are direct applications of this formula to the relevant $\mf{so}(2k)$ and $\mf{su}(2k)$ representations. In the next part we will discuss these two concrete examples in more detail.
\subsubsection{Real vs Complex Lie Algebras}\label{subsubsec:realcomplex}
So far we have reviewed the representation theory of complex semisimple Lie algebras.
However, the symmetry algebras relevant to this work, $\so(2k)$ and $\su(2k)$, are real compact Lie algebras.
This distinction does not invalidate our understanding, because the finite-dimensional complex representations of a compact real Lie algebra $\mf g$ are in one-to-one correspondence with the finite-dimensional complex representations of its complexification $\mf g\ot_{\mb R}\mb C$ [e.g., $\mf{so}(2k)$ complexifies to $\mf{so}_\mathbb{C}(2k)$, and $\mf{su}(2k)$ complexifies to $\mf{sl}_\mathbb{C}(2k)$].
For the unitary representations that we are interested in, the generators $Y_m$ of the Cartan-Weyl basis are Hermitian, while $X_\alpha\+=X_{-\alpha}$.
Further, due to the raising/lowering nature of $X_{\pm\alpha}$, one can show that $[X_\alpha,X_{-\alpha}]= H_\alpha$ (i.e. the dual element to $\alpha$ in $\mf h$).
We also have by definition of $X_\alpha$, that $[H_\alpha,X_\alpha]=\alpha(H_\alpha) X_\alpha =  (\alpha,\alpha)X_\alpha$ [see also Eq.~(\ref{eq:proddefn})].
Hence $\{X_\alpha,X_{-\alpha}, H_\alpha\}$ form an $\su(2)$ subalgebra. 
This allows us to show that the set of $X_{-\alpha}$ for $\alpha\in\Phi^+$ which kills a highest weight vector $v_{\bar\lambda}$ (which is anyway killed by $X_\alpha$ for $\alpha \in \Phi^+$) is given by the roots $\alpha$ orthogonal to $\bar\lambda$:
\begin{equation}
	w=X_{-\alpha}v_{\bar\lambda}\ \implies\ \norm{w}^2=v_{\bar\lambda}\+ X_{\alpha}X_{-\alpha}v_{\bar\lambda}=v_{\bar\lambda}\+ (X_{-\alpha}X_{\alpha}+cH_\alpha) v_{\bar\lambda}=\norm{v_{\bar\lambda}}^2 \bar\lambda(H_\alpha) \propto (\bar\lambda,\alpha).
\end{equation}
where we used the definitions in Eqs.~(\ref{eq:wtdefn}) and (\ref{eq:proddefn}), and that $X_\alpha v_{\bar\lambda} = 0$.
This result will be useful for constructing the ground state manifolds $M^{(k)}_{\mc U}$ geometrically, since it shows that the complex subalgebra which stabilizes the highest weight space $V_{\bar\lambda}$ is generated by $\{Y_m\}\cup\{X_\alpha:\alpha\in\Phi^+\}\cup\{X_{-\alpha}:(\bar\lambda,\alpha)=0,\,\alpha\in\Phi^+\}$.
But since Hermitian generators can only be constructed as a linear combination of both $X_{\alpha}$ and $X_{-\alpha}$, this means that the stabilizer subalgebra $\mf g_{\bar\lambda}\subseteq\mf g$ of the \textit{real} algebra, only complexifies to the algebra generated by $\{Y_m\}\cup\{X_{\pm\alpha}:(\bar\lambda,\alpha)=0,\,\alpha\in\Phi^+\}$.
By identifying the real stabilizer subalgebra in this way, we will then construct the orbit of a highest weight vector $v_{\bar\lambda}$ (which will correspond to $M^{(k)}_{\mc U}$) as the quotient of the Lie group associated to $\mf g$ by the stabilizer subgroup associated to $\mf g_{\bar\lambda}$.
\subsection{Commutant dimension and Ground state manifolds for free fermions} 
\label{app:reprcomm}

In both cases analyzed in the main text, we wish to maximize the Casimir element $C_\mrm{tot}$ of a tensor product representation where all factors are equal.
Since for a two-site system $C_\mrm{tot}=C_{ij}$, our analysis will also apply to the two-site Casimir, which we use in the main text.
Let $\mc H_\mrm{loc}$ be the local representation and $w(\mc H_\mrm{loc})$ the set of its weights, then using Eq.~(\ref{eq:casimirexpression}) for a highest weight $\bar\lambda = \sum_i{(\bar\lambda)_i}$ in the tensor representation we get:
\begin{equation}
	C_\mrm{tot}(\bar\lambda)=(\bar\lambda,\bar\lambda)+(\bar\lambda,2\rho)= \sum_{ij}((\bar\lambda)_i,(\bar\lambda)_j)+\sum_i((\bar\lambda)_i,2\rho).\label{eq:tomaximize}
\end{equation}
In the following we will see how this expression is maximized by choosing the same weight on each site, resulting in a fully polarized state.
\subsubsection{Matchgates/Majorana free fermions: $\mf{so}(2k)$}
%
With reference to the generators defined in Eq.~\eqref{eq:gen-so2k}, we can choose the Cartan subalgebra $\mf{h} \subset \so(2k)$ to be spanned by the $k$ mutually commuting generators as
\begin{equation}
    \mf h = \{Y_m \defn J^{2m-1,2m}\}_{m=1}^k,
\end{equation}
consistent with the fact that $\so(2k)$ has rank $k$.
If we introduce the dual basis vectors $\epsilon_m \in \mf{h}^*$ defined by $\epsilon_m(Y_n) = \delta_{mn}$, then the inner product on $\mf{h}^*$ defined in Eq.~\eqref{eq:proddefn} is simply $(\epsilon_m, \epsilon_n) = \delta_{mn}$, so elements of $\mf h^*$, in particular the weights and roots, can be written as $k$-tuples in $\mb R^k$.
From the commutation relations, the root system of $\so(2k)$ is $\alpha \in \{\pm(\epsilon_m + \epsilon_n),\pm(\epsilon_m - \epsilon_n)\}$ for $m\neq n$ (type $D_k$ in the Dynkin classification), and we define the positive roots to be
\begin{equation}
\Phi^+=\{\epsilon_m \pm \epsilon_n:1 \leq m < n \leq k\}.
\label{eq:Phi+so2k}
\end{equation}
Then we obtain that the Weyl vector defined in Eq.~(\ref{eq:casimirexpression}) in the basis ordered as $(\epsilon_1, \epsilon_2, \cdots, \epsilon_k)$ reads:
\begin{equation}
	\rho=\big(\,k-1,\,k-2,\,...,\,1,\,0\,\big).
\end{equation}
Since each $Y_m=-\frac{i}{2}\bar\gamma^{2m-1}\bar\gamma^{2m}$ can take values $\pm \frac{1}{2}$ on the two Majorana modes, the weights of the $2^k$-dimensional local Fock space $\mc H_\mrm{loc}$ have the form $\lambda = \frac{1}{2} \sum_{m=1}^k \sigma_m \epsilon_m$, where $\sigma_m \in \{\pm 1\}$.
In the local representation there are two highest weights, and these correspond to irreducible spinor representations with opposite parity $\bar\Gamma=\prod_{m=1}^k 2Y_m$.
The chosen set of positive roots in Eq.~(\ref{eq:Phi+so2k}), implies that the two highest weights are
\begin{equation}
	\bar\lambda^+=\left(+\frac{1}{2},...,+\frac{1}{2},+\frac{1}{2}\,\right)\quad\text{and}\quad \bar\lambda^-=\left(+\frac{1}{2},...,+\frac{1}{2},-\frac{1}{2}\,\right),
\label{eq:lambdabarexpr}
\end{equation}
since no other weights can be obtained by adding elements of $\Phi^+$ to either $\bar\lambda^{+}$ or $\bar\lambda^{-}$.
In the many-body Hilbert space $\mc H_\mrm{loc}^{\ot 2L}$, both terms in Eq.~\eqref{eq:tomaximize} can be maximized separately: the $(\bar\lambda,\bar\lambda)$ term is maximized when all the local components $\bar\lambda_i$ are equal, and the $(\bar\lambda,2\rho)$ term is maximized when the sign vector $\sigma_m$ of $\bar\lambda_i$ is $+1$ for all $m\in\{1,...,k-1\}$.
Therefore, the ground state space in $\mc H_\mrm{loc}^{\ot 2L}$ decomposes into two irreps of multiplicity 1 whose highest weights correspond to tensor products of identical vectors $(v_{\bar\lambda^\pm})^{\ot 2L}$, namely
\begin{equation}
	V_{2L}^{2k,+}:\ 2L\cdot\bar\lambda^+=\big(\!+\!L,\,...,\,+L,\,+L\,\big),\qquad
	V_{2L}^{2k,-}:\ 2L\cdot\bar\lambda^-=\big(\!+\!L,\,...,\,+L,\,-L\,\big).
\end{equation}
By applying the Weyl dimension formula (Eq.~\eqref{eq:weyl-dim-formula}) we immediately obtain the result of Eq.~\eqref{eq:maj-dim}. 
As explained in the main text (cf.~Eq.~(\ref{eq:Lambda2kcondition})), physically the local highest weight vectors $v_{\bar\lambda^\pm}$ are two specific fermionic Gaussian states of even/odd parity.
If we express the irreducibility condition Eq.~\eqref{eq:spanall} in terms of the group $\SO(2k)$
\begin{equation}
	V_{2L}^{2k,\pm} = \mrm{span}\{(U\ket{v_{\bar\lambda^\pm}})^{\ot 2L}:U\in \SO(2k)\}
\end{equation}
we find that each irrep $V_{2L}^{2k,\pm}$ is spanned by ``fully polarized'' states of the form $(U\ket{v_{\bar\lambda^\pm}})^{\ot 2L}$.
The set of states in $\mc H_\mrm{loc}$ traced out by the orbit of $\ket{v_{\bar\lambda^\pm}}$ is therefore exactly the ground state manifold $M^{(k, \pm)}_\mrm{MG}$, and together, they exactly form the set of fermionic Gaussian states (of both parities) on $2k$ modes.
Geometrically each orbit can be built as the homogeneous manifold $\SO(2k)/\U(k)$, since $\U(k)$ is the real subgroup generated by the stabilizer subalgebra for each $v_{\bar\lambda^\pm}$, where the stabilizer algebra can be determined by constructing roots that are orthogonal to $\bar{\lambda}^{\pm}$, as explained in Appendix~\ref{subsubsec:realcomplex}.
For example, for $v_{\bar\lambda^+}$ this subalgebra is spanned by all $Y_m$ and by all $X_\alpha$ where $\alpha\in\{\pm(\epsilon_m-\epsilon_n):1\leq m<n\leq k\}$, since these are the only roots orthogonal to $\bar\lambda^+$ of Eq.~(\ref{eq:lambdabarexpr}); this is exactly the root system associated to $\U(k)$.
The full ground state manifold can be described as in the main text [cf.~Eq.~\eqref{eq:gsman-maj}] by extending $\SO(2k)$ to $\O(2k)$, adding a unitary map from $V_{2L}^{2k,+}$ to $V_{2L}^{2k,-}$ (and vice versa), which renders the group action transitive on the whole ground state space, leading to the orthogonal Grassmannian manifold of Eq.~(\ref{eq:gsman-maj}).
\subsubsection{$U(1)$-symmetric free fermions: $\mf{su}(2k)$}  
With reference to the generators defined in Eq.~\eqref{eq:gen-su2k}, we can choose the Cartan subalgebra $\mf{h} \subset \su(2k)$ to be spanned by the mutually commuting generators 
\begin{equation}
    \mf h = \{Y_a \defn S^{aa}\}_{a=1}^{2k},
\end{equation}
which are not linearly independent, as they satisfy $\sum_{a=1}^{2k} Y_a=0$, consistent with the fact that $\su(2k)$ has rank $2k -1$.
If we introduce a dual set of vectors $\epsilon_a \in \mf{h}^*$ defined by $\epsilon_a(Y_b) = \delta_{ab}-\frac{1}{2k}$, then the inner product on $\mf{h}^*$ defined in Eq.~\eqref{eq:proddefn} is simply $(\epsilon_a, \epsilon_b) = \delta_{ab}-\frac{1}{2k}$ (these are again overcomplete since $\sum_{a=1}^{2k} \epsilon_a=0$, hence the $-\frac{1}{2k}$ term in the inner product for this standard basis).
The commutation relations show that the root system of $\su(2k)$ is $\alpha = \epsilon_a - \epsilon_b$ for $a\neq b$ (type $A_{2k-1}$ in the Dynkin classification), and we can define the positive roots to be 
\begin{equation}
\Phi^+=\{\epsilon_a - \epsilon_b:1 \leq a < b \leq 2k\}.
\label{eq:SU2kroots}
\end{equation}
The generators $S^{ab}$ are exactly the ones associated to the roots $\epsilon_a - \epsilon_b$.
Then the Weyl vector reads:
\begin{equation}
	\rho=\frac{1}{2}\sum_{a=1}^{2k}(2k-2a+1)\epsilon_a.
\end{equation}
Since each $Y_a=\bar c\+_a\bar c_a-\frac{1}{2}$ can take values $\pm \frac{1}{2}$ on each site, we can check that the weights of the $\binom{2k}{k}$-dimensional local Fock space $\mc H_\mrm{loc}$ have the form $\lambda = \sum_{a=1}^{2k} n_a \epsilon_a$, where $n_a \in \{0,1\}$ and $\sum_{a=1}^{2k} n_a=k$, which is necessary for the associated vector $v_\lambda$ to have the correct eigenvalue under $Y_a$:
\begin{equation}
	Y_a v_\lambda = \lambda(Y_a)v_\lambda = \sum_{b=1}^{2k} n_b\left(\delta_{ab}-\frac{1}{2k}\right)v_\lambda = \left(n_a -\frac{1}{2}\right)v_\lambda.
\end{equation}
In the local representation there is a unique highest weight, thus proving irreducibility. This highest weight is
\begin{equation}
	\bar\lambda=\sum_{a=1}^{k}\epsilon_a\ \implies\ n_a=(1,1,...,1,0,...,0,0).
\label{eq:barlambdasu2k}
\end{equation}
since no other weights can be obtained by adding elements of $\Phi^+$ to $\bar\lambda$ [cf.~Eq.~\eqref{eq:SU2kroots}].
As before, in the many-body Hilbert space $\mc H_\mrm{loc}^{\ot L}$, both terms in Eq.~\eqref{eq:tomaximize} are maximized separately by choosing the highest weight $L\cdot\bar\lambda$, associated to the fully polarized vector $(v_{\bar\lambda})^{\ot L}$, which spans the associated irrep $V_{L}^{2k}$.
By applying the Weyl dimension formula of Eq.~\eqref{eq:weyl-dim-formula} we immediately obtain the result of Eq.~\eqref{eq:dim-u1}.
As explained in the main text (cf.~Eq.~(\ref{eq:Lambda4kcondition})), physically the local highest weight vector $v_{\bar\lambda}$ is a specific fermionic Gaussian state at half-filling (the one where the modes $\{1,...,k\}$ are occupied).
If we express the irreducibility condition Eq.~\eqref{eq:spanall} in terms of the group $\SU(2k)$
\begin{equation}
	V_{L}^{2k} = \mrm{span}\{(U\ket{v_{\bar\lambda}})^{\ot L}:U\in \SU(2k)\}
\end{equation}
we find the ground state space $V_{L}^{2k}$ is spanned by fully polarized states of the form $(U\ket{v_{\bar\lambda}})^{\ot L}$.
The set of states in $\mc H_\mrm{loc}$ traced out by the orbit of $\ket{v_{\bar\lambda}}$ is therefore exactly the ground state manifold $M^{(k)}_\mrm{NC}$, and is exactly the set of fermionic Gaussian states on $2k$ sites with $k$ excitations.
Geometrically the orbit can be built as the homogeneous manifold
\begin{equation}
	\Gr(k,2k) \cong \SU(2k)/\S(\U(k)\times \U(k)) \cong \U(2k)/(\U(k) \times \U(k)),
\end{equation}
where $\S(\U(k)\times \U(k))$ is the real subgroup generated by the stabilizer subalgebra of $v_{\bar\lambda}$, determined by constructing roots that are orthogonal to $\bar{\lambda}$ of Eq.~(\ref{eq:barlambdasu2k}), as explained in Appendix~\ref{subsubsec:realcomplex}.
This subalgebra is spanned by all $Y_m$ and by all $X_\alpha$ where $\alpha=\pm(\epsilon_a-\epsilon_b)$ where either $1\leq a <b\leq k$ or $k< a <b\leq 2k$, since these are the only roots orthogonal to $\bar\lambda$.
Each of the two sets of roots produce a $\mf u(k)$ algebra, but the global tracelessness condition $\sum_{a=1}^{2k}\epsilon_a=0$ forces the total algebra to just be $\su(k)\oplus\su(k)\oplus\mf u(1)$, which corresponds to $\S(\U(k)\times \U(k))$.
\section{Relationship between the Grassmanian manifolds for Free-fermion Commutants} 
\label{app:grass}
In the main text (cf.~Eq~\eqref{eq:grassmannian-half-filling}) we have shown how the space of fermionic Gaussian states at half-filling on $2k$ sites (i.e., the ground state manifold $M^{(k)}_{\rm NC}$ for number conserving free fermions) can naturally be understood as the Grassmannian $\Gr(k,2k)$, which parametrizes all $k$-dimensional hyperplanes $W$ in $\mb C^{2k}$.
Here we will discuss how the orthogonal Grassmannian $\OG(k,2k)$ (i.e. the ground state manifold for matchgates/Majorana free fermions) can be seen as submanifold of $\Gr(k,2k)$ in a physically meaningful way.
As we have seen in the main text, this is the manifold $M^{(k)}_{\rm MG}$ of fermionic Gaussian states on $2k$ Majorana modes.
Given a non-degenerate bilinear inner product $B(\cdot,\cdot)$ on $\mb C^{2k}$, $\OG(k,2k)$ parametrizes the \textit{isotropic} subspaces of dimension $k$, where $\forall u,w\in W:B(u,w)=0$, so naturally $\OG(k,2k)\subseteq\Gr(k,2k)$.
To see this physically, let us split the complex fermionic operators into Majorana modes $\{c_a=\frac{\gamma_{a,1}+i\gamma_{a,2}}{2}\}_{a=1}^{2k}$.
Given a fermionic Gaussian state $\ket v_i$ on the Majorana modes $\{\gamma_{a,i}\}_{a=1}^{2k}$ (for a given $i\in\{1,2\}$), let $D_{na}$ be the $k\times 2k$ Bogoliubov transformation matrix to get $k$ complex fermionic operators such that:
\begin{equation}
	f_{n,i} = \sum_{a=1}^{2k} D_{na}\gamma_{a,i},\quad f_{n,i}\ket v_i=0\;\;\forall\ n\in\{1,...,k\}.
\end{equation}
Then we can use the same matrix $D$ to define the complex fermionic operators $b_n=\sum_{a=1}^{2k} D_{na} c_a$ and $d_n\+=\sum_{a=1}^{2k} D_{na} c_a\+$, where again $n\in\{1,...,k\}$, which satisfy
\begin{equation}
	b_n (\ket v\ot\ket v)= \frac{f_{n,1}+if_{n,2}}{2}(\ket v\ot\ket v)=0,\qquad d_n\+ (\ket v\ot\ket v)= \frac{f_{n,1}-if_{n,2}}{2}(\ket v\ot\ket v)=0.
\end{equation}
The replicated state $\ket v\ot\ket v$ is therefore proportional to the fermionic Gaussian state $d_{1}\+ \cdots d_n\+\ket{\mrm{vac}}$, where $\ket{\mrm{vac}}$ is the state on $2k$ complex fermionic modes which is annihilated by $\{b_n,d_n\}_{n=1}^k$. 
This is precisely a half-filled fermionic Gaussian state, which belongs to the manifold $M^{(k)}_{\rm NC}$ for number conserving free-fermions.
Thus we have the embedding $M^{(k)}_\mrm{MG}$ for the Majorana free fermions in the larger manifold $M^{(k)}_\mrm{NC}$ for the $\U(1)$-conserving free fermions.
It is easy to verify that the canonical commutation relations of the $\{f_{n,i}\}$ complex fermions impose $DD\+=\1_k$ and $DD^T=0$.
Hence the rows of the matrix $D$ are an orthonormal basis of a subspace $W\subseteq \mb C^{2k}$ of dimension $k$ (since $DD\+=\1_k$), and this subspace is isotropic under the bilinear product $B(v,w)=v^Tw$ (since $DD^T=0$).
Thus the embedding $\ket v\mapsto\ket v\ot\ket v$ corresponds exactly to the natural embedding $\OG(k,2k)\hookrightarrow \Gr(k,2k)$.

\section{Replica formalism for fermionic Hilbert spaces} 
\label{app:conventions}

In this work, we are working with Hilbert spaces which have the natural structure of fermionic Fock spaces.
This may introduce some ambiguity in the definition of the $k$-commutant, since there are two different but natural ways to define the replicated Hilbert space.
To address the heart of the matter, we consider a generic physical system $\mc H$ (i.e., one replica) where a global parity operator $\Gamma$ is defined.
We assume $\Gamma^2=1$, and define the parity of an operator $X$ based on its commutation with $\Gamma$:
\begin{equation}
	\text{Parity even:}\;\;[\Gamma,X]=0,\qquad \text{Parity odd:}\;\;\{\Gamma,X\}=0.
\end{equation}
Notice that the product of operators of the same parity is parity even, and the product of operators of different parities is parity odd.
We will write $|X|=0$ for parity even operators and $|X|=1$ for parity odd operators.
Similarly we can define parity even/odd states based on the associated $\pm1$ eigenvalue of $\Gamma$.
In order to define the $k$-commutant, we will have to replicate the physical system $k$ times; we can do so in two different ways:
\begin{equation}
	\mc H^{\ot k}=\mc H\ot\mc H\ot...\ot\mc H,\qquad \mc H^{\got k}=\mc H\got\mc H\got...\got\mc H.
\end{equation}
The first is the usual tensor product, while the second is a \textit{$\mb Z_2$-graded tensor product}~\cite{graded-tprod}; what this means in practice is that the product of replicated operators behaves differently in the two cases:
\begin{equation}
	(X\ot Y)(R\ot S) = (XR)\ot (YS),\qquad (X\got Y)(R\got S) = (-1)^{|Y||R|}(XR)\got (YS).
\end{equation}
As long as all operators under consideration are parity even, there is therefore no distinction between the algebras of operators replicated using the two kinds of tensor products.
To indicate the two kinds of operator $X$ acting on the replica $a$ we will use the notations:
\begin{equation}
	X^a=\1\ot...\ot X\ot ...\ot \1,\qquad \gg X^a=\1\got...\got X\got ...\got \1.
\label{eq:XXhat}
\end{equation}
We can see how these structures apply to a system of $2L$ Majorana modes $\{\gamma_i\}_{i=1}^{2L}$. The parity is $\Gamma=(-i)^L\prod_{i=1}^{2L}\gamma_i$, and Majorana-string operators are parity even (odd) if the length of the string is an even (odd) number.
Then for $a\neq b$, if we choose to replicate the system with the usual tensor product, we will have $[\gamma_i^a,\gamma_j^b]=0$ (as in Ref.~\cite{larocca2026} and in the main text), while with the graded tensor product we will find $\{\gg\gamma_i^a,\gg\gamma_j^b\}=0$ (as in Ref.~\cite{poetri2026}).
Through the introduction of Klein factors analogous to the ones of Eq.~\eqref{eq:maj-kelin}, it is possible to describe the graded operator algebra using regular tensor products. Given an operator $X$, we define
\begin{equation}
	\bar X^a=\begin{cases}
		\1\ot...\ot \1\ot X\ot \1\ot ...\ot \1,\quad&\text{if }|X|=0,\\
		\Gamma \ot...\ot \Gamma \ot X\ot \1\ot ...\ot \1,\quad&\text{if }|X|=1.\\
	\end{cases}
\end{equation}
This family of operators satisfies the same commutation relations as $\{\gg X^a\}$ defined in Eq.~(\ref{eq:XXhat}).
The mapping can be extended to any operator string (and hence by linearity to the whole algebra of operators) by multiplying operators on different replicas in a fixed order, e.g., starting from $a=1$ left to right.
The map $\gg X^a\mapsto \bar X^a$ thus becomes an isomorphism of the operator algebras on $\mc H^{\got k}$ and on $\mc H^{\ot k}$.
This is also the structure behind the well-known Jordan-Wigner transformation.
Let us now show that the $k$-commutants defined with either convention are isomorphic operator algebras, assuming that all operators $U\in \mc U$ in the ensemble of unitaries of interest have definite parity.
One can also easily verify that
\begin{equation}
	V\defn\exp(i\frac{\pi}{4}\Gamma) \;\;\implies\;\;V\+ X V=\begin{cases}
		X,\quad&\text{if }|X|=0,\\
		iX\Gamma,\quad&\text{if }|X|=1.\\
	\end{cases}
\end{equation}
By defining $W=W_1 W_2 \cdots W_k$ with $W_a=\1^a$ if $a\equiv k\,(\mrm{mod}\, 2)$ and $W_a=V^a$ otherwise, we can therefore implement the mapping between the two conventions as a unitary transformation $W\+ U^{\ot k}W\propto\bar U^1\bar U^2...\bar U^k$. And since:
\begin{equation}
	[U^{\ot k},X]=0\quad\iff\quad [W\+ U^{\ot k}W,W\+ XW]=0,
\end{equation}
we have that $X$ belongs to the $k$-commutant for the replicated group $\mc U^{\ot k}$ if and only if $W\+ XW$ belongs to the $k$-commutant for the ``graded'' replicated group $\mc U^{\got k}$.
In this work we chose to use the normal tensor product convention for two main reasons.
First, because it is the more natural choice when the fermionic operators arise from the Jordan-Wigner transformation of spin operators on a chain, since replicated spin chains do not possess a natural fermionic Fock space structure.
This point of view is also adopted in previous works on many-body dynamics \cite{enriched-phases,fava2023nonlinear, fava2024,swann2025}.
Second, it is natural to describe operators $X\in\mrm{End}(\mc H)$ as elements in $\mc H\ot\mc H^*$, and our vectorization approach to the problem, where we express operators $X$ as vectors $\ket X\in\mc H^{\ot 2}$ (cf. Eq.~\eqref{eq:vectorization}), would therefore clash with the choice of a graded tensor product for the replicas.
As discussed in Ref.~\cite{graded-tprod}, it is possible to endow $\mrm{End}(\mc H)$ with a natural graded tensor structure, but this leads to complications in the definition of the adjoint operators $\mc L^{(k)}_{\alpha,ij}$, which we wish to avoid: e.g., for $a\neq b$ the state $\ket{\gg\gamma^a}$ would not be killed by the adjoint action of $\gg\gamma^b$, which would then act non-locally across copies.
\section{Details on the commutant projection computations} 
\label{app:conti}
In this appendix we will compute the average over all fermionic Gaussian states $\rho=\ketbra{\psi}{\psi}$ on a system of $2L$ Majorana modes (hence $L$ physical sites) of the $k$-th purity, introduced in the main text:
\begin{gather}
	E_k(\ell) = \tr_A(\rho_A^k) = \braket{\1^e_{\bar A}\1^\eta_{A}}{\rho^{\ot k}},\qquad \rho_A=\tr_{\bar A}(\rho),
\end{gather}
where $\bar A=\{1,...,\ell\}$ and $A=\{\ell +1,...,L\}$, and $\ket{\1^e_{\bar A}\1^\eta_{A}}$ is a ``domain-wall'' state that implements the partial trace over $\bar A$ and the cyclic trace over $A$, which we define more precisely below.
The average of this quantity over all fermionic Gaussian states is obtained by expressing $\ket{\psi} = U\ket{0}$ for some reference state $\ket{0}$, and averaging over the unitaries $U$ from the group $\mc U_{\rm MG}$. 
Hence the average of $E_k(\ell)$ can be expressed as:
\begin{equation}\label{eq:tocompute}
	\overline{E_k(\ell)} = {\int_{\mc U_{\rm MG}}\mrm{d}U\ \bra{\1^e_{\bar A}\1^\eta_{A}}(U \ot U^\ast)^{\ot k}\ket{\rho^{\ot k}}} = \langle{\1^e_{\bar A}\1^\eta_{A}}\vert\, \Pi^{(k)}_{\rm MG}\ket{0}^{\otimes 2k},
\end{equation}
where $\Pi^{(k)}_{\rm MG}$ is the projector onto the $k$-commutant of $\mc U_{\rm MG}$, and we have used Eq.~(\ref{eq:twirlcomm}).
In the following, we will evaluate this quantity using generalized coherent states.
\subsection{Characterization of the boundary states} 
The state $\ket{\1^e_{\bar A}\1^\eta_{A}}$ is a domain wall configuration of the form $\ket{\1^e_{\bar A}\1^\eta_{A}}=\ket{\1^e_1} \dots \ket{\1^e_\ell} \vert\,\1^\eta_{\ell +1}\rangle\dots\ket{\1^\eta_L}$.
In terms of the replicated basis, these states correspond to different pairings of the $2k$ copies.
If we denote a generic local basis state for a single copy by $\ket{n^a}$, the identity pairings are given by~\cite{vardhan_entanglement_2024}:
\begin{equation}
	\ket {\1^e} = \sum_{\{n^a\}} \delta_{n^1n^2}\,...\,\delta_{n^{2k-1}n^{2k}} \ket{n^1}\ket{n^2}...\ket{n^{2k}},\qquad
	\ket {\1^\eta} = \sum_{\{n^a\}} \delta_{n^2n^3}\,...\,\delta_{n^{2k}n^{1}} \ket{n^1}\ket{n^2}...\ket{n^{2k}}.
\label{eq:eetapairing}
\end{equation}
These states can be thought of as appropriate tensor product of maximally entangled states of the form 
\begin{equation}
    \ket{I} = \sum_{n}{\ket{n}_u\ket{n}_d},
\end{equation}
which is just the identity operator represented as a state on a doubled Hilbert space of a single physical site, one corresponding to the `ket' space (labelled by $u$) and one to the `bra' space (labelled by $d$).
To describe these pairings in the fermionic language, let us first focus on this single state $\ket{I}$.
Algebraically, the identity operator trivially commutes with any local Majorana operator $\gamma_i$ on the two Majorana sites $i\in\{1,2\}$ on which $\ket I$ lies, such that $(\gamma_i I - I \gamma_i) = 0$.
Under the state-operator mapping, left-multiplication corresponds to acting on the `ket' with $\tilde\gamma_i^u$, while right-multiplication corresponds to acting on the `bra' with $\tilde\gamma_i^d$ [cf.~Eq.~\eqref{eq:majorana-ad}], thus
\begin{equation}
	(\tilde\gamma_i^u-\tilde\gamma_i^d)\ket I = 0\;\;\implies\;\; \tilde\gamma_i^u\ket I=\tilde\gamma_i^d\ket I.
\end{equation}
If we introduce the Klein factors of Eq.~\eqref{eq:maj-kelin}, this implies the local stabilizer conditions:
\begin{equation}
	\bar\gamma_1^u \bar\gamma_1^d \bar\gamma_2^u \bar\gamma_2^d \ket I =
    \tilde\gamma_1^u \tilde\Gamma^u\tilde\gamma_1^d \tilde\gamma_2^u \tilde\Gamma^u\tilde\gamma_2^d \ket I =
    -\tilde\gamma_1^u \tilde\gamma_1^d \tilde\gamma_2^u \tilde\gamma_2^d \ket I =
    -\ket I.
\end{equation}
We can now apply this stabilizer logic to our specific configurations on $2k$ copies.
For $\ket{\1^e}$, copy $2m-1$ is paired with copy $2m$, while $\ket{\1^\eta}$ cyclically connects copy $2m$ to $2m+1$ in Eq.~(\ref{eq:eetapairing}).
Given the above constraint on the physical sites composed of two Majorana sites, we have some freedom to isolate the Majorana structure of the replicas.
One way to do that is to define the states $\ket{e}$ and $\ket\eta$ of positive parity on a single replicated Majorana site such that they satisfy the quadratic stabilizer conditions~\cite{swann2025}:
\begin{equation}
\begin{gathered}
	\ket{\1^e_j}=\ket{e_{2j-1}}\ket{e_{2j}},\qquad \forall m\in\{1,...,k\}: -i\bar\gamma^{2m-1}\bar\gamma^{2m}\ket{e}=\ket{e},\\
	\ket{\1^\eta_j}=\ket{\eta_{2j-1}}\ket{\eta_{2j}},\qquad \forall m\in\{1,...,k-1\}: -i\bar\gamma^{2m}\bar\gamma^{2m+1}\ket{\eta}=\ket{\eta},\quad -i\bar\gamma^{1}\bar\gamma^{2k}\ket{\eta}=\ket{\eta}.\label{eq:estab}
\end{gathered}
\end{equation}
If we define the complex fermions at a given site $i$ by pairing replicated Majorana modes as:
\begin{equation}
	d_{m,i}=\frac{1}{2}(\bar\gamma_i^{2m-1}+i\bar\gamma_i^{2m}),\quad m\in\{1,...,k\},\label{eq:mymodes}
\end{equation}
then $\ket{e}$ is the vacuum state, with zero excited modes.
Both states $\ket{e}$ and $\ket\eta$ are fermionic Gaussian states on the $2k$ replica modes, so any of them can be chosen as reference states for the integral formula for the projector Eq.~\eqref{eq:coherent-state-projection}. We will choose the vacuum state $\ket{e}$, and write
\begin{equation}
	\Pi^{(k)}_{\rm MG}	= \mc D_k\int_{\mrm O(2k, \mb R)} \dd R\, \big(\,U_R\ket{e}\!\big)^{\ot 2L}\big(\!\bra{e}U_R\+\,\big)^{\ot 2L},\label{eq:ourprojectr}
\end{equation}
where $U_R$ is the unitary representation of $R$.
For the state $\ket 0$ in Eq.~(\ref{eq:tocompute}), we choose the fermionic Gaussian state $\ket 0=\ket{\phi_0}^{\ot L} \in \mH_L$ which satisfies $-i\gamma_{2j-1}\gamma_{2j}\ket 0=\ket 0$ for all $j\in\{1,...,L\}$. 
This state has no entanglement between different physical sites labeled by $j$.
In the replica space, $\ket 0^{\ot 2k}$ can be characterized again in terms of stabilizers as:
\begin{equation}
	\forall j\in\{1,...,L\}, a\in\{1,...,2k\}: \quad(-1)^ai\bar\gamma_{2j-1}^a\bar\gamma_{2j}^a\ket 0^{\ot 2k}=\ket 0^{\ot 2k}.\label{eq:0statestab}
\end{equation}
\subsection{Parametrization of coherent states} 
Using Eq.~\eqref{eq:ourprojectr}, we can express the averaged observables of Eq.~\eqref{eq:tocompute} as,
\begin{equation}
	\overline{E_k(\ell)} =  \mc D_k\int_{\O(2k, {\mb R})}\!\!\!\!\dd R\ \Omega_e^{2\ell}\,\Omega_\eta^{2(L-\ell)}\,\Omega_0^{L},
\label{eq:Elexpr}
\end{equation}
where $\dd R$ is the Haar measure over $\O(2k, \mb R)$, $\mc D_k$ is the dimension of the Matchgate commutant [cf.~Eq.~\eqref{eq:maj-dim}], and
\begin{equation}
	\Omega_{e}(R)\defeq \bra{e}U_R\ket{e},\quad \Omega_{\eta}(R)\defeq \bra{\eta}U_R\ket{e}, \quad \Omega_{0}(R)\defeq \bra{e}\bra{e}(U_R\+\ot U_R\+)\ket{\phi_0}^{\ot 2k}.\label{eq:definition-omega}
\end{equation}
Note that the complexity of its evaluation only scales with $k$, and the factorization in the spatial direction provides simplifications that we illustrate below.
Also note that since all states in Eq.~\eqref{eq:definition-omega} have even parity, the overlap is zero if $R\notin\SO(2k,\mb R)$.
As a matter of convention, we will choose the reference state $\ket e$ and the coherent state $\ket \eta$ to have norms $\braket{e}{e} = \braket{\eta}{\eta} = 1$.
However, since in the original expression one must have $\braket{\1^e}{\1^e}=\braket{\1^\eta}{\1^\eta}=\tr(I^{\ot k})=2^k$, we shift this normalization factor to $\ket{\phi_0}^{\ot 2k}$ instead, thus setting $\Vert {\ket{\phi_0}^{\ot 2k}}\Vert^2=2^k$ in $\Omega_0(R)$.
These matrix overlaps can be parametrized using standard free-fermion methods~\cite{Oi_2012,AAAA}.
First, we choose Eq.~\eqref{eq:mymodes} to be our reference set of complex fermionic operators, so that $\ket e$ is the vacuum.
For $R\in\O(2k, \mb R)$, we define the action of the unitary representation $U_R$ on Majorana modes as:
\begin{equation}
	\binom{\{U_R \bar\gamma^{2m-1}U_R\+\}}{\{U_R \bar\gamma^{2m}U_R\+\}} = R \binom{\{\bar\gamma^{2m-1}\}}{\{\bar\gamma^{2m}\}},
\end{equation}
where we chose to place Majorana modes with and odd index on the top half of the vector, and modes with an even index on the bottom half.
In the basis associated to the $d$-modes, the orthogonal matrix $R\in\O(2k, {\mb R})$ has the form:
\begin{equation}
	T\!\defeq\!\frac{1}{2}\begin{pmatrix} \1_k & i\1_k \\ \1_k & -i\1_k \end{pmatrix}\!,\ \ 
	\binom{\{d_m\}}{\{d_m\+\}} \!= T \binom{\{\bar\gamma^{2m-1}\}}{\{\bar\gamma^{2m}\}} \implies
	\binom{\{U_R d_m U_R\+\}}{\{U_R d_m\+ U_R\+\}} \!=\tilde R \binom{\{d_m\}}{\{d_m\+\}},\ \ 
	\tilde R\defeq TRT^{-1} \!=\! \begin{pmatrix} A & B \\ B^* & A^* \end{pmatrix}\!,\label{eq:tilderotation}
\end{equation}
where $A$ and $B$ are complex $k\times k$ matrices.
Note that $\tilde R$ is unitary since $T T^\dagger = \1_{2k}/2$, $R R^\dagger = R R^T = \1_{2k}$, therefore these matrices satisfy:
\begin{equation}
	AA\++BB\+=\1_k,\quad AB^T+BA^T=0.\label{eq:constraints}
\end{equation}
These conditions on $A$ and $B$ are also sufficient to ensure that $R = T^{-1}\tilde RT\in\O(2k, {\mb{R}})$, since
\begin{equation}
    R = 
    \begin{pmatrix}
        \Re(A + B) & \Im(B - A)\\
        \Im(A + B) & \Re(A - B)
    \end{pmatrix}.
\label{eq:RABparam}
\end{equation}

Note that we could have also parametrized the Eq.~(\ref{eq:Elexpr}) by integrating over the manifold parametrized by the more standard antisymmetric $Z$ matrices of Eq.~(\ref{eq:Zmatrix}).
In terms of these matrices, the matrix associated to the modes $\{d_m\}$ is $Z=-A^{-1}B$, because due to the transformation rules Eq.~\eqref{eq:tilderotation}, we have that:
\begin{equation}
    0=d_m\ket e \propto \big(U_Rd_m U_R\+\big)\big(e^{\frac{1}{2}d\+Zd\+}\ket e\!\big)=e^{\frac{1}{2}d\+Zd\+}\big(A_{mn}\big(d_n+ Z_{nl}d_l\+\big)+ B_{ml}d\+_l\big)\ket e.
\end{equation}
\subsection{Computation of the matrix elements}
With this parametrization, we proceed to evaluate $\Omega_0(R)$, $\Omega_e(R)$, and $\Omega_\eta(R)$.
First, we have \cite{Oi_2012,AAAA}:
\begin{equation}
	\Omega_e(R)=\bra e U_R\ket e = \sqrt{\det A}.\label{eq:simpleoverlap}
\end{equation}
For $\Omega_\eta$, we can use the same formula by writing $\ket\eta=U_{R_\eta}\ket e$.
The stabilizer conditions of Eq.~\eqref{eq:estab} imply that $\ket \eta$ is annihilated by the modes
\begin{equation}
	\Big\{\,\frac{1}{2} (d_m-d_{m+1}-d\+_m-d\+_{m+1}),\ \frac{1}{2}(d_1+d_k+d_1\+-d_k\+)\,\Big\}_{m=1}^{k-1}.
\end{equation}
Thus, since according to Eq.~\eqref{eq:tilderotation} these modes are obtained as $\tilde R_\eta \binom{\{d_m\}}{\{d_m\+\}}$, we have that
\begin{equation}
	\tilde R_\eta=\begin{pmatrix} A_\eta & B_\eta \\ B^*_\eta & A^*_\eta \end{pmatrix},\quad
	A_\eta =-\frac{1}{2}(C-\1_k) ,\quad B_\eta = -\frac{1}{2}(C+\1_k),\quad
	C\sim\begin{pmatrix}
	0 & +1 & 0 & 0  \\
	0 & 0 & +1 & 0  \\
	0 & 0 & 0 & +1  \\
	-1 & 0 & 0 & 0 
	\end{pmatrix},\label{eq:etamatrix}
\end{equation}
where $C$ is a pseudo-permutation real orthogonal matrix with the property $C^k=-\1_k$ (written above in the $k=4$ case as an example).
Then using the previous result Eq.~\eqref{eq:simpleoverlap}: 
\begin{equation}
	\Omega_\eta(R)=\bra e \big(U_{R_\eta}\+U_R\big)\ket e=\bra e U_{(R_\eta^TR)}\ket e=\sqrt{\det\big(A_\eta\+A+B_\eta^TB^*\big)}.
\end{equation}
In Eq.~\eqref{eq:etamatrix} we chose the phase of the $\tilde R_\eta$ matrix such that $\Omega_\eta(R=\1_{2k})^2=1/2^{k-1}=\braket{\1^\eta}{\1^e}$ (note that for $R=\1_{2k}$ we have $A=\1_k$ and $B=0$).
Finally, we can compute $\Omega_0$ using the same procedure, applied to a two-site system.
The stabilizer conditions of Eq.~\eqref{eq:0statestab} imply that $\ket{\phi_0}^{\ot 2k}$ is annihilated by the modes
\begin{equation}
	\Big\{\,
	\frac{1}{2}(d_{m,1} + i d_{m,2}+d_{m,1}^\dagger + i d_{m,2}^\dagger),\ 
	\frac{1}{2}(-i d_{m,1} - d_{m,2}+i d_{m,1}^\dagger + d_{m,2}^\dagger)
	\,\Big\}_{m=1}^{k}.
\end{equation}
From this we can derive the 2-site block structure for the $A_0$ and $B_0$ matrices, as in the $\eta$ case in Eq.~(\ref{eq:etamatrix}):
\begin{equation}
	A_0 = \frac{1}{2} \begin{pmatrix} \1_k & i\1_k \\ -i\1_k & -\1_k \end{pmatrix}, \qquad
	B_0 = \frac{1}{2} \begin{pmatrix} \1_k & i\1_k \\ i\1_k & \1_k \end{pmatrix}
\end{equation}
thus, if we write $U_R\ot U_R=U_{R'}$ in Eq.~\eqref{eq:definition-omega} with $R'=\mqty(R & 0 \\ 0 & R)$, we get:
\begin{equation}
	\Omega_0(R)= \bra{e,e}U_{(R'^T R_0)}\ket{e,e}= \sqrt{\det\big[\big(A\++B^T\big)\big(A\+-B^T\big)\big]} = \sqrt{\det\big[\big(A^\ast - B\big)\big(A^\ast + B\big)\big]}
\end{equation}
where we normalize the overlap such that $\Omega_0(\1_{2k})=1$, i.e., with normalization $\Vert {\ket{\phi_0}^{\ot 2k}}\Vert^2=2^k$, as is our convention.

\subsection{Saddle-point computation of the Free fermion Page curve} 
With these matrix elements, Eq.~(\ref{eq:Elexpr}) reduces to:
\begin{equation}\label{eq:integralpage}
	\overline{E_k(\ell)} =  \mc D_k\int_{\O(2k, {\mb R})}\!\!\!\!\dd R\ 
	\det (A)^{\ell}\,\det\big(A_\eta\+A+B_\eta^TB^*\big)^{L-\ell}\,\det\big[\big(A^\ast - B\big)\big(A^\ast + B\big)\big]^{L/2},
\end{equation}
where $R$ is parametrized in terms of $A$ and $B$ using Eq.~(\ref{eq:RABparam}), which satisfy the constraints of Eq.~(\ref{eq:constraints}).
We wish to evaluate this for large $L$ and $\ell = r L$ using a saddle-point approximation. 
To identify the dominating saddle, we first identify two symmetries of this integral:
\begin{enumerate}
\item[(i)]
It is clear that the integrand is a homogeneous polynomial function of the components of the matrices $A,A^*,B,B^*$.
While the matrix elements of $A$ and $B$ can be complex, the coefficients of the polynomial are real.
The integral is then symmetric under complex conjugation, where mapping $(A,B)\mapsto (A^*,B^*)$ simply results in the complex conjugate of the integrand. This also shows that $\overline{E_k(\ell)}$ is real, although we know this by construction.
\item[(ii)] 
Notice also that the constraints of Eq.~\eqref{eq:constraints} are invariant under $(A,B)\mapsto (O^T A O,O^T B O)$ where $O$ is a real orthogonal matrix.
The terms $\det(A)$ and $\det(A^\ast \pm B)$ are also invariant under this transformation, and if $[C,O]=0$, then also $\det(A\+_\eta A+B^T_\eta B^*)$ does not change.
Since $C$ is itself orthogonal, the integrand is therefore invariant under rotations by elements of the group generated by $C$, i.e., $O \in \{\1,C,C^2,...,C^{2k-1}\}$.
\end{enumerate}
Given these symmetries, we will look for a symmetric saddle point that dominates the integral for large $L$ and $\ell = r L$, which is specified by saddle-point values $A = A_{\ast}$ and $B = B_{\ast}$.
While this imposition of the symmetry onto the saddle point is just an ansatz, we find that it leads to physically sensible answers.
Imposing that $A_{\ast}$ and $B_{\ast}$ are invariant under these symmetries, we obtain that 
\begin{equation}
    (A_{\ast})^\ast = A_{\ast},\;\;\;(B_{\ast})^\ast = B_{\ast},\;\;\;[C,A_{\ast}] = [C,B_{\ast}] = 0.
\label{eq:saddleconditions}
\end{equation}
The spectrum of $C$ is non-degenerate, with eigenvalues given by the $k$-th roots of $-1$:
\begin{equation}
    C=V\+ \mrm{diag}(\{f_p\})V\;\;\;\implies\;\;\;f_p = -e^{-i \frac{2\pi}{k}p},\;\;\;p = -\frac{k-1}{2}, \cdots, \frac{k-1}{2},
\label{eq:fpdefn}
\end{equation}
where we will refer to $p$ as the ``momenta''.
Moreover, since $C$ is real we can show that $P=VV^T$ is the momentum flipping operator:
\begin{equation}
	 C = C^\ast\;\;\implies\;\;\mrm{diag}(\{f_p\}) = P\,\mrm{diag}(\{f_p\})^* P\+ = P\,\mrm{diag}(\{f_{-p}\})P\+.
\end{equation}
Since the spectrum of $C$ is non-degenerate, this means that $A_{\ast}$ and $B_{\ast}$ are simultaneously diagonalizable together with $C$, which means that
\begin{equation}
    A_\ast = V^\dagger \mrm{diag}(\{\alpha_p\})V,\;\;\;\;B_\ast = V^\dagger \mrm{diag}(\{\beta_p\})V.
\label{eq:ABdiag}
\end{equation}
Since $A_{\ast}$ and $B_{\ast}$ are also real, it must also hold that $\alpha_p=\alpha_{-p}^*$ and $\beta_p=\beta_{-p}^*$, which can be shown using the momentum flipping operator $P$.
From this, we can show that the conditions of Eq.~\eqref{eq:constraints} on this ansatz space reads:
\begin{equation}
	\forall p: \alpha_p \alpha_{-p} +\beta_p\beta_{-p} = 1,\quad
	\alpha_p \beta_{-p} +\beta_p\alpha_{-p} = 0,
\label{eq:alphabetaconst}
\end{equation}
which can be obtained by explicit substitution of Eq.~(\ref{eq:ABdiag}) and using the properties of the operator $P$.
Similarly, using Eq.~(\ref{eq:etamatrix}) and (\ref{eq:fpdefn}), the integrand of Eq.~\eqref{eq:integralpage} becomes
\begin{equation}
    \prod_{p=-(k-1)/2}^{(k-1)/2} (\alpha_p)^\ell \left(\frac{1-f_{-p}}{2}\alpha_p-\frac{1+f_{-p}}{2}\beta_{p}\right)^{L-\ell} ((\alpha_{p})^2-(\beta_{p})^2)^{L/2} {\;\;\defn e^{L\cdot S[\{\alpha_p\},\{\beta_p\}]}},
\end{equation}
where, within this diagonal ansatz space, the integrand is real and positive, and we can write it in terms of an action $S[\{\alpha_p\},\{\beta_p\}]$ defined as
\begin{equation}
	S[\{\alpha_p\},\{\beta_p\}] = -\sum_{p=-(k-1)/2}^{(k-1)/2}\left( r \log\alpha_p +
	(1-r)\log(\frac{1-f_{-p}}{2}\alpha_p-\frac{1+f_{-p}}{2}\beta_{p})+
	\frac{1}{2}\log(\alpha_{p}^2-\beta_{p}^2)\right).
\end{equation}
We then want to find the values of $\{\alpha_p\}$ and $\{\beta_p\}$ that satisfy the constraints of Eq.~(\ref{eq:alphabetaconst}) and extremize this action.
Since the constraints couple opposite modes, we can decouple the action into a sum of two-mode actions (with an extra single-mode $p=0$ action if $k$ is odd), as
\begin{equation}
    {S[\{\alpha_p\}, \{\beta_p\}] = \sum_{p = \frac{1}{2}}^{\frac{k-1}{2}}{S_{(\pm p)}}\;\;\;\text{if $k$ even},\qquad S[\{\alpha_p\}, \{\beta_p\}] = S_{(p = 0)} + \sum_{p = 1}^{\frac{k-1}{2}}{S_{(\pm p)}}\;\;\;\text{if $k$ odd}},
\end{equation}
where we have the two-mode actions
\begin{equation}
\begin{split}
	S_{(\pm p)}=-\Big( r \log(\alpha_p\alpha_{-p}) +
	(1-r)\log(\frac{1-f_{-p}}{2}\alpha_p-\frac{1+f_{-p}}{2}\beta_{p})+\qquad\qquad\qquad\qquad\qquad\qquad\qquad\\
	\qquad\qquad\qquad+(1-r)\log(\frac{1-f_{p}}{2}\alpha_{-p}-\frac{1+f_{p}}{2}\beta_{-p})+
	\frac{1}{2}\log(\alpha_{-p}^2-\beta_{-p}^2)+
	\frac{1}{2}\log(\alpha_{p}^2-\beta_{p}^2)\Big),
\end{split}
\end{equation}
and the single-mode action (which is only relevant when $k$ is odd)
\begin{equation}
    S_{(p=0)} = -\left( r \log\alpha_0 +
	(1-r)\log(2\alpha_0)+
	\frac{1}{2}\log(\alpha_{0}^2-\beta_{0}^2)\right).
\label{eq:zeromode}
\end{equation}
We can then solve for the extrema of each of these actions $S_{(p = 0)}$ and $S_{(\pm p)}$ separately for each $p$.
Starting with the two-mode action, under the given constraints of Eq.~(\ref{eq:alphabetaconst}), $\{\alpha_p\}$, $\{\beta_p\}$, and $\{f_p\}$ can be fully parametrized as
\begin{equation}
\begin{gathered}
    f_p=e^{i\omega_p},\;\;\;\omega_p = \pi - \frac{2\pi p}{k}\\
    \alpha_p=\cos\theta_p e^{i\phi_p},\;\;\;\beta_p=i\sin\theta_p e^{i\phi_p},\;\;\;\text{where}\;\;\theta_{-p}=-\theta_{p},\;\;\phi_{-p}=-\phi_{p},\label{eq:omegapdef}
\end{gathered}
\end{equation}
The two-mode action then becomes:
\begin{equation}
	S_{(\pm p)}(\theta_{p},\phi_{p})=-\left( r \log(\cos^2\theta_p) + (1-r)\log(\sin^2(\theta_p - \omega_p/2)) \right)
\end{equation}
which is fully independent of $\phi_p$, causing a saddle-point degeneracy.
The saddle point equation for the $\theta_p^\ast$ that extremizes this action is:
\begin{equation}
	r \tan(\theta_p^*) = (1-r) \cot(\theta_p^* - \omega_p/2) \ \implies\ \tan(\theta_p^*) = \frac{\tan(\omega_p/2) \pm \sqrt{\tan^2(\omega_p/2) + 4r(1-r)}}{2r}\label{eq:saddlepointeq}.
\end{equation}
For the single-mode action that appears when $k$ is odd the constraints impose $\alpha_0,\beta_0$ are both real (from complex-conjugation symmetry), and one of them is $0$ and the other has norm $1$ (from the matrix constraints).
The saddle point solution to Eq.~(\ref{eq:zeromode}) in this case is simply $\alpha_0=1$ and $\beta_0=0$, thus resulting in a zero contribution to the action $S_{(p = 0)}=0$.
In total, for both even and odd $k$, we have:
\begin{equation}
		-\frac{1}{L}\log (\overline{E_k(\ell)})\sim S_*(r)= -\sum_{p>0}^{\frac{k-1}{2}} \left( r \log(\cos^2\theta_p^*) + (1-r)\log(\sin^2(\theta_p^* - \omega_p/2)) \right)\label{eq:solution}
\end{equation}
If we specialize this to $k=2$ we get one pair of momenta $p=\pm 1$, corresponding to $f_{\pm 1}=\pm i$:
\begin{equation}
	t\equiv \tan\theta^* = \frac{1 - \sqrt{1 + 4r - 4r^2}}{2r}\quad\implies\quad S_*(r) = \log(1+t^2) - (1-r)\log\left(\frac{(1-t)^2}{2}\right)
\end{equation}
Which perfectly replicates Eq.~(49) of Ref.~\cite{swann2025}, where $q\equiv 2r-1$ and $2L\mapsto L$ (since in the reference, the system consists of $L$ Majorana modes instead of $2L$).
\subsection{Exact computation of the Free fermion Page curve for $k=2$} 
As an additional example, we can compute $\overline{E_2(\ell)}$ exactly for any $\ell$ and $L$, exploiting the exceptional decomposition $\so(4)\cong\su(2)\oplus\su(2)$.
Following Ref.~\cite{swann2025}, we can directly express $\overline{E_2(\ell)}$ as
\begin{equation}
    \overline{E_2(\ell)}=\big(\bra{\up}^{\ot 2\ell}\bra{\rt}^{\ot 2(L-\ell)}\big)\Pi_\mrm{Heis}\big(\ket{\up\up}+\ket{\dn\dn}\big)^{\ot L},
\end{equation}
where $\Pi_\mrm{Heis}$ is the projector onto the ground state space of the spin-$\frac{1}{2}$ ferromagnetic Heisenberg model $h_{ij}=(1-X_iX_j-Y_iY_j-Z_iZ_j)$, which appears within a sector of the more general $\SO(4)$ Heisenberg model discussed in Sec.~\ref{subsec:Majoranaeffectiveham} relevant for $k = 2$, and $\ket\rt =(\ket\up+\ket\dn)/\sqrt 2$.
Note the similarity with Eq.~\eqref{eq:tocompute}.
The dimension of the ground state space is $\mc D_2=2L+1$, and the associated ground state manifold is the Bloch sphere, which matches with $M_\mrm{MG}^{(2,+)}\cong \mb S^2$. The integral Eq.~\eqref{eq:coherent-state-projection} will therefore be over $\SU(2)$ spin-coherent states:
\begin{equation}
    \overline{E_2(\ell)}=(2L+1)\int_{\SU(2)}\dd U\,\big(\bra{\up}^{\ot 2\ell}\bra{\rt}^{\ot 2(L-\ell)}\big)\big(U\ketbra{\up}{\up}U\+\big)^{\ot 2L}\big(\ket{\up\up}+\ket{\dn\dn}\big)^{\ot L}.
\end{equation}
If we parametrize $U\in\SU(2)$ as:
\begin{equation}
	U = \begin{pmatrix} x & -y^* \\ y & x^* \end{pmatrix},\quad |x|^2+|y|^2=1,
\end{equation}
then the integral becomes:
\begin{equation}
	\overline{E_2(\ell)} = \frac{2L+1}{2^{L-\ell}}\int_{\mb S^3}\dd\Omega(x,y)\  x^{2\ell} (x+y)^{2(L-\ell)}((x^*)^2+(y^*)^2)^L,
\end{equation}
where $\mb S^3$ is the unit sphere in $\mb C^2$, and the measure $\dd\Omega$ on $\mb S^3$ is normalized to have volume $1$.
Notice that when expanding the binomials, due to the spherical symmetry, the only terms that survive integration are the ones where the powers of $x$ and $y$ perfectly match those of $x^*$ and $y^*$ respectively.
We finally obtain:
\begin{equation}
	\overline{E_2(\ell)} = \frac{2L+1}{2^{L-\ell}} \sum_{k=0}^{L-\ell}\binom{2(L-k)}{2k}\binom{L}{k} \int_{\mb S^3}\dd\Omega(x,y)\ |x|^{4(L-k)} |y|^{4k}=\frac{1}{2^{L-\ell}} \sum_{k=0}^{L-\ell}\frac{\binom{2(L-k)}{2k}\binom{L}{k}}{\binom{2L}{2k}},
\end{equation}
where the spherical integral can be computed by turning it into a Gaussian integral which then results in a product of Gamma functions~\cite{Folland01052001}.
This reproduces a result which was obtained in Ref.~\cite{lastres2026} by projecting onto $\mrm{Com}_{\rm MG}^{(2)}$ using an orthonormal basis.

\end{document}